\documentclass[11pt]{article}

\usepackage{array}
\usepackage{rotating}
\usepackage{graphicx}
\usepackage{amsmath, amssymb, amsthm}
\usepackage{booktabs}
\usepackage{url}
\usepackage{xcolor}
\usepackage{hyperref}
\usepackage{tikz}
\usetikzlibrary{arrows.meta, positioning, calc,decorations.markings}
\usepackage{longtable}
\hypersetup{
colorlinks=true,
linkcolor=blue,
citecolor=blue,
urlcolor=blue
}
\usepackage[outline]{contour}
\contourlength{1.5pt}
\usepackage[margin=1in]{geometry}
\usepackage{array}
\usepackage{rotating}
\usepackage{graphicx}
\usepackage{amsmath, amssymb, amsthm}
\usepackage{booktabs}
\usepackage{url}
\usepackage{xcolor}
\usepackage{hyperref}
\usepackage{tikz}
\usepackage{float}
\usetikzlibrary{arrows.meta, positioning, calc,decorations.markings}
\usepackage{longtable}
\usepackage[backend=biber,style=numeric,sorting=none]{biblatex}
\newcommand{\rev}[1]{#1}

\newtheorem{theorem}{Theorem}[section]
\newtheorem{proposition}[theorem]{Proposition}

\newcommand{\NMA}{\mathrm{NMA}}

\newcommand{\Var}{\mathrm{Var}}
\newcommand{\Cov}{\mathrm{Cov}}

\title{Contrast-Space Projection for Network Meta-Analysis:
An Exact and Invariant Study-Based Decomposition of Direct and Indirect Contributions}
\author{
Chong Wang$^{1,2}$\thanks{Corresponding author: chwang@iastate.edu} \and
Yanqi Zhang$^{1}$ \and
Zhezhen Jin$^{3}$ \and
Annette O'Connor$^{4}$
}

\date{}

\begin{document}
\maketitle

\begin{center}
{\small
$^1$ Department of Statistics, Iowa State University, USA \\
$^2$ Department of Veterinary Diagnostic and Production Animal Medicine, Iowa State University, USA \\
$^3$ Department of Biostatistics, Columbia University, USA \\
$^4$ Department of Large Animal Clinical Sciences, Michigan State University, USA
}
\end{center}

\begin{abstract}
Network meta-analysis (NMA) combines direct and indirect comparisons across a connected treatment network to estimate relative treatment effects. However, there is a lack of exact contribution decompositions that reproduce NMA estimates, particularly in the presence of multi-arm trials that induce within-study correlations.

We address this reproducibility gap by developing a contrast-space projection formulation of NMA. \rev{We express the NMA estimator as a linear mapping of observed pairwise contrasts onto the consistency-constrained contrast space. The same projection separates direct and indirect evidence, removes within-study representational redundancy through canonicalization, and, under fixed effects, represents the standard generalized Cochran \(Q\) decomposition into within-design heterogeneity and between-design inconsistency.}

Building on this representation, we introduce a rigorous study-based definition of direct and indirect evidence through a canonical within-study reduction that removes algebraic redundancy and yields a unique, invariant decomposition. This leads to exact covariance-aware decompositions of the NMA estimator into study-level direct and indirect contributions, with indirect evidence further resolved into path-level components. The resulting weights are directly analogous to inverse-variance weights in pairwise meta-analysis and enable, to our knowledge, the first forest-plot representation that exactly reconstructs the NMA estimator. \rev{Thus, the proposed decomposition remains algebraically tied to the single fitted NMA model rather than producing an alternative set of network estimates.}

The framework also yields projection-based diagnostic and graphical tools, including forest plots, tension plots, and path-based visualizations. Applications to empirical datasets demonstrate how the proposed approach provides a reproducible and interpretable framework for understanding evidence contributions in NMA, supporting transparent interpretation and reporting.
\end{abstract}

\noindent\textbf{Keywords:}
network meta-analysis; contrast-space projection; evidence decomposition; multi-arm trials; study-level contributions; reproducibility

\section{Introduction}\label{sec:introduction}

Network meta-analysis (NMA) extends classical pairwise meta-analysis
by synthesizing evidence across multiple treatments within a connected
network of randomized trials and is now widely used in quantitative
evidence synthesis, clinical guidelines, and health-technology
assessment \parencite{lu2011linear, Salanti2011, Dias2013,  RuckerSchwarzer2015}. Under the consistency assumption, NMA combines direct and
indirect comparisons to estimate relative treatment effects,
including comparisons that have not been evaluated directly within any
single study. The resulting network estimates reflect a structured propagation of
evidence from observed treatment contrasts through the treatment
network, yielding a network-level pattern of evidence contributions. Understanding and decomposing this evidence is therefore central to the transparent interpretation of
NMA results. However, how such evidence is partitioned into direct and indirect components—and attributed to individual studies—remains poorly defined in current practice.

In traditional pairwise meta-analysis, the pooled treatment effect is typically expressed as a weighted average of study-level estimates. Because the estimator is linear, the relative contribution of each study can be calculated directly from the inverse-variance weights and is routinely displayed in forest plots. This transparent decomposition plays an important role in evidence interpretation, allowing systematic reviewers to identify which studies contribute most strongly to the pooled estimate \parencite{hedges1985statistical}. 

In contrast, such a transparent study-level decomposition is much less straightforward in NMA. NMA estimates combine multiple direct and indirect evidence pathways, often involving correlated contrasts arising from multi-arm trials. As a consequence, the algebraic relationship between individual studies and the final network estimates is not transparent. Although the statistical estimation of NMA effects is well understood,
the explicit and reproducible representation of network estimates as linear functions of the observed direct evidence, while preserving numerical reproducibility and compatibility with multi-arm covariance structures, remains largely underdeveloped. As NMA methods have become cornerstone tools in quantitative evidence synthesis, ensuring transparency in how evidence contributes to each estimate is critical. To support transparent reporting, as emphasized in the PRISMA-NMA (Preferred Reporting Items for Systematic Reviews and Meta-Analyses for Network Meta-Analyses) guidelines \parencite{Hutton2015}, it is critical to understand the exact contribution of each evidence source to the final network estimates. However, fulfilling this requirement in an exact and reproducible manner remains a significant methodological challenge, particularly in the presence of multi-arm trials. Existing contribution analyses are often constructed from marginal pairwise summaries, which do not preserve the full covariance structure used in the generalized least squares (GLS) formulation of NMA. As a result, such approaches may fail to reproduce the reported NMA estimates exactly and may yield decompositions that lack invariance to within-study parameterization.

Understanding how evidence propagates through treatment networks has attracted substantial methodological interest. \textcite{Rucker2012} established connections between NMA and electrical networks, interpreting treatment comparisons through graph Laplacians. Building on this idea, \textcite{Koenig2013} showed that linear operators arising in NMA can be interpreted as describing how evidence propagates between treatments. \textcite{Papakonstantinou2018} further proposed a stream decomposition in which edge-level flows are partitioned into paths connecting treatments, while \textcite{Davies2022} introduced a random-walk interpretation in an aggregate-level NMA formulation based on adjusted weights and a network of two-arm trials, deriving analytic expressions for evidence-flow proportions. In addition to flow-based interpretations, a separate line of work evaluates evidence importance through perturbation approaches such as leave-one-out (LOO) analyses. These methods quantify how removing a study or comparison changes a network estimate or its uncertainty. For example, \textcite{Rucker2020} formalize a notion of statistical importance for treatment-level NMA estimates, and \textcite{Mao2025} extend these ideas to component NMA. Such approaches measure the sensitivity of the estimator but do not provide an additive decomposition of the estimator itself. 

Despite these advances, an important reproducibility issue remains in current practice. Many contribution analyses construct the direct-evidence vector from marginal pairwise meta-analyses of each comparison. However, NMA estimators are derived from a \rev{GLS} system that explicitly incorporates correlations induced by multi-arm trials. When marginal pairwise summaries are substituted into such linear decompositions, the full study-level covariance structure is replaced by a comparison-level approximation, and the NMA formulation is replaced by a system of weight-adjusted two-arm trials. As a result, the decomposition may fail to reproduce the reported NMA estimates exactly. 
\rev{This limitation arises because existing approaches rely on approximations or transformations of the underlying GLS system, rather than operating directly on the GLS mapping that defines the estimator.} A formulation that works directly with this projection and preserves the full covariance structure is therefore needed.

A related and largely unresolved issue is that, under the current NMA estimation framework, there is no rigorous, model-based and explicit definition that identifies which component of the estimator corresponds to direct evidence at the study level. \rev{A trial provides direct evidence for a target comparison whenever it includes both treatments, regardless of how its within-study contrasts are parameterized. In a multi-arm trial, however, different parameterizations may or may not retain the target comparison explicitly, even though they represent the same study data and yield the same NMA estimate. Consequently, the same study contribution may be allocated differently between explicit direct and indirect components.} This lack of a rigorous definition and invariance complicates interpretation and limits the reproducibility of study-level contribution analyses.

Motivated by the lack of a rigorous and explicit model-based definition of direct and indirect evidence in NMA, as well as the need for exact algebraic reproducibility, we develop a contrast-space projection (CSP) formulation of NMA based on the Fisher-information geometry of the treatment network. Working directly in the space of all pairwise contrasts, we show that the NMA estimator can be expressed as a linear mapping of the observed evidence onto the consistency-constrained contrast space induced by orthogonal projection. This projection representation provides an exact and covariance-aware mapping between observed contrasts and network estimates, fully accounting for the dependence structure induced by multi-arm trials. Building on this formulation, we introduce a rigorous and explicit formulation of direct and indirect evidence and derive a representation-invariant decomposition of the network estimator into study-level components. \rev{This common projection formulation provides the foundation for the study-level decomposition and diagnostic tools developed below.}

\rev{A central principle of the framework is numerical and algebraic compatibility with the fitted NMA model. CSP does not generate a second set of network treatment-effect estimates: for every target comparison, its direct, indirect, study-level, and path-level contributions reconstruct the single fitted NMA estimate exactly. For the fixed-effects diagnostic decomposition, the global generalized Cochran statistic is partitioned exactly as \(Q_{\mathrm{net}}=Q_{\mathrm{het}}+Q_{\mathrm{inc}}\).} 

The main methodological contributions of this work are fourfold:

\begin{itemize}
  \item \rev{We place established GLS, hat-matrix, and graph-theoretic representations of NMA in a common full-contrast-space formulation that provides the foundation for the canonical study-based decomposition developed in Section~\ref{sec:canonical_decomposition}.}
  
  \item We introduce a rigorous study-based definition of direct and indirect evidence through a canonical within-study reduction that removes algebraic redundancy. This definition is unique and invariant to within-study parameterization.
  
  \item We develop exact, covariance-aware decompositions of the NMA estimator into study-level direct and indirect contributions, and further decompose indirect evidence into path-level components, with weights analogous to inverse-variance weights in pairwise meta-analysis.
  
\item Using these quantities, we construct practical diagnostic and graphical tools, including study-level forest plots, tension plots, path-based visualizations, and \rev{a projection representation of the established generalized Cochran \(Q\) decomposition \(Q_{\mathrm{net}}=Q_{\mathrm{het}}+Q_{\mathrm{inc}}\), separating within-design heterogeneity from between-design inconsistency \parencite{Krahn2013,Higgins2012,White2012}.}
\end{itemize}

\rev{By combining a projection-based formulation with a canonical, representation-invariant decomposition, the proposed approach bridges the gap between theoretical consistency, numerical reproducibility, and practical interpretability.}

The remainder of the paper is organized as follows.
Section~\ref{sec:formulation} introduces the contrast-space formulation
and projection representation of NMA.
Section~\ref{sec:canonical_decomposition} develops the canonical study-level decomposition and the representation-invariant definition of direct and indirect evidence, together with the induced path-based decomposition.
Section~\ref{sec:tools} presents the resulting diagnostic and graphical tools derived from the projection framework.
Section~\ref{sec:application} illustrates the methodology using several empirical NMA datasets.

\section{Methods}\label{sec:methods}

NMA combines observed direct contrasts to obtain treatment effect estimates that satisfy the consistency constraints of the treatment network, meaning that all pairwise comparisons are coherent with a single set of underlying treatment effects. This can be formulated as finding the closest set of consistency-compatible contrasts to the observed data, accounting for the covariance structure of the observations. \rev{The resulting covariance-weighted projection onto the consistency-compatible contrast space provides the organizing principle for the methodological developments that follow: it separates direct from indirect evidence, induces a canonical representation of study-level contributions, and provides a nested-projection representation of the standard generalized Cochran \(Q\) decomposition for heterogeneity and inconsistency.}

\subsection{Contrast-space formulation of NMA}\label{sec:formulation}

Standard formulations of NMA parameterize
treatment effects relative to a chosen reference treatment. In
contrast, we formulate NMA directly in the space of all estimable
pairwise treatment contrasts, which correspond to the comparisons
reported and interpreted in practice.

\rev{The principal ingredients in this subsection build on established NMA results: the linear hat-matrix representation and evidence-flow interpretation described by \textcite{Koenig2013}, the graph-Laplacian and effective-resistance geometry developed by \textcite{Rucker2012}, and the equivalence of reduced and full-pairwise representations of multi-arm trials studied by \textcite{Rucker2014reduce}. We collect these results in a common notation because they provide the foundation for the new canonical study-based decomposition introduced in Section~\ref{sec:canonical_decomposition}.}

This contrast-space formulation yields an exact linear mapping from
the observed contrasts to the fitted network estimates under the
\rev{GLS} model. The resulting representation
preserves the full covariance structure induced by multi-arm trials
and provides the foundation for a projection-based characterization of
evidence contributions.

\subsubsection{Contrast-space parameterization and consistency constraints}
\label{sec:csp}

We begin by parameterizing treatment effects in the full contrast space.
We assume the treatment network is connected and contains $T$ treatments. Let 
$m=\binom{T}{2}$ denote the number of pairwise treatment contrasts.
Define the contrast vector

\[
\boldsymbol\theta = (\theta_{12},\theta_{13},\dots,\theta_{(T-1)T})^\top 
\in \mathbb{R}^m ,
\]

where $\theta_{ab}$ denotes the treatment effect comparing treatment $b$ to treatment $a$.

Under the consistency assumption, treatment effects satisfy relations
such as $\theta_{ac} = \theta_{ab} + \theta_{bc}$. Collecting these
relations yields the linear constraint system

\[
\mathbf C \boldsymbol\theta = 0 ,
\]

where $\mathbf C$ encodes the consistency equations. The admissible
parameter space is therefore

\[
\mathcal M =
\{\boldsymbol\theta \in \mathbb{R}^m : \mathbf C \boldsymbol\theta = 0 \}.
\]

This consistency space has dimension $T-1$, corresponding to the number of independent treatment effects in a connected network.

\rev{Let $\mathbf B\in\mathbb R^{m\times(T-1)}$ be a full-column-rank
matrix whose columns form a basis for the consistency space
$\mathcal M=\ker(\mathbf C)$. Thus,}

\[
\rev{
\mathbf C\mathbf B=\mathbf 0,
\qquad
\operatorname{col}(\mathbf B)=\mathcal M.
}
\]

\rev{Every consistency-compatible contrast vector can therefore be
written uniquely as}

\[
\rev{
\boldsymbol\theta=\mathbf B\boldsymbol\beta,
\qquad
\boldsymbol\beta\in\mathbb R^{T-1}.
}
\]

\rev{The matrix $\mathbf B$ provides coordinates for the consistency
space. Although these coordinates depend on the selected basis, the
resulting fitted all-pairs contrast vector does not.}

The consistency constraints ensure that all pairwise treatment effects
can be represented through a set of underlying treatment-specific
parameters, yielding a coherent network structure. In particular, the
space $\mathcal M$ represents the set of contrast vectors that are
compatible with a common set of treatment effects, and thus corresponds
to the identifiable parameter space in a connected network. This
constraint structure plays a central role in defining the projection representation of the NMA estimator developed below.

\rev{\textbf{Running three-treatment example.} Consider treatments \(A\), \(B\), and \(C\), with \(\boldsymbol\theta=(\theta_{AB},\theta_{AC},\theta_{BC})^\top\). Taking \(\boldsymbol\beta=(\theta_{AB},\theta_{AC})^\top\), consistency implies \(\theta_{BC}=\theta_{AC}-\theta_{AB}\), so that}
\[
\rev{\boldsymbol\theta=\mathbf B\boldsymbol\beta,\qquad \mathbf B=\begin{pmatrix}1&0\\0&1\\-1&1\end{pmatrix}.}
\]

\subsubsection{Observed contrasts and covariance structure}

\rev{Let \(\mathbf y\in\mathbb R^{n_c}\) denote the vector of observed relative treatment effects, where \(n_c\) is the contrast-level sample size, defined as the total number of retained study-level contrasts in the analysis. This quantity is distinct from \(m=\binom{T}{2}\), the number of possible pairwise treatment contrasts among the \(T\) treatments: \(n_c\) depends on the included studies and their within-study representation, whereas \(m\) depends only on the number of treatments.} For studies with $k \geq 3$ treatment arms, the dimension and composition of $\mathbf y$ depend on how within-study correlations are represented \parencite{Rucker2014reduce}. 

Multi-arm trials may be represented either through a set of $k-1$ baseline contrasts or through the full set of $k(k-1)/2$ pairwise contrasts together with their induced covariance structure \parencite{Lu2004, Jackson2012, Rucker2014reduce}. 

\begin{enumerate}

\item \textbf{Dimension reduction (basic-contrast representation).}  
The standard approach selects one treatment as a baseline within each trial and constructs $\mathbf y$ from the resulting $k-1$ contrasts relative to that baseline \parencite{Lu2004, Higgins2012}. This representation ensures that the within-study covariance matrix is non-singular and leads to the familiar formulation used in most NMA implementations. The remaining pairwise contrasts within the trial are implicitly represented through linear combinations of the retained contrasts.

\item \textbf{Full pairwise within-study contrast representation.}  
An alternative representation retains all $k(k-1)/2$ pairwise contrasts that can be formed within each multi-arm study. In this case the within-study covariance matrix is singular because the contrasts are linearly dependent. Graph-theoretical and flow-based approaches often operate on this expanded set of within-study contrasts and adjust the covariance structure to account for the induced correlations \parencite{Rucker2014reduce, Papakonstantinou2018}. In the present framework, this representation is retained and the singular covariance matrix is handled directly using the Moore–Penrose pseudoinverse.
This representation provides a common contrast space in which all
within-study parameterizations can be embedded and will serve as the
foundation for the representation-invariant decomposition developed
in Section~\ref{sec:canonical_decomposition}.
\end{enumerate}

The graphical representation of NMA, in which treatments are nodes and pairwise comparisons form the edges of a network, corresponds naturally to the full pairwise contrast space. In this setting, a target network comparison can be expressed as a linear combination of the direct comparison estimates in the network \parencite{Papakonstantinou2018}. 

In the basic-contrast representation, the retained within-study contrasts depend strictly on the choice of baseline treatment. Even when a natural baseline exists—such as a placebo or standard care—any target comparison between two non-baseline treatments (e.g., comparing two active arms) will not appear explicitly in the observed contrast vector, and will be regarded as indirect evidence through this specific network structure. Furthermore, in many applications where no baseline is intrinsically preferred, different choices yield different but algebraically equivalent parameterizations of the exact same multi-arm study. As a result, whether a target comparison appears explicitly as a retained study contrast depends entirely on this representational choice, even though the resulting NMA estimator remains unchanged. This parameterization dependence is a primary source of ambiguity in study-level definitions of direct and indirect evidence and motivates the representation-invariant construction developed later.

Let $\mathbf V$ denote the \rev{$n_c \times n_c$} covariance matrix of $\mathbf y$, capturing both within-study correlations and potential between-study heterogeneity. The observed study-level contrast model is

\begin{equation}
\mathbf y = \mathbf X \boldsymbol\theta + \boldsymbol\varepsilon,
\qquad
\boldsymbol\varepsilon \sim N(0,\mathbf V),
\label{eq:contrast_model}
\end{equation}

where $\mathbf X$ denotes the \rev{GLS} design matrix for the observed study contrasts. Each row of $\mathbf X \in \mathbb{R}^{\rev{n_c} \times m}$ specifies the linear combination of treatment contrasts corresponding to a particular observation in $\mathbf y$. \rev{Throughout Section~\ref{sec:formulation}, \(\mathbf y\), \(\mathbf X\), and \(\mathbf V\) denote a generic valid representation of the observed contrasts, which may be either a reduced basic-contrast representation or a full pairwise representation. The tilde notation introduced later in Section~\ref{sec:canonical_decomposition} is reserved specifically for the transformation of the representation used for fitting into the common full pairwise within-study contrast space required for the canonical CSP decomposition.}

\rev{Under a fixed-effects specification, \(\mathbf V\) consists solely of within-study sampling covariance matrices and is block-diagonal across studies. Under a common-heterogeneity random-effects specification, the heterogeneity variance \(\tau^2\) is incorporated through an additional within-study contrast covariance block: each treatment contrast has heterogeneity variance \(\tau^2\), two contrasts from the same multi-arm study that share an arm have heterogeneity covariance \(\pm\tau^2/2\) according to their orientations, and contrasts involving disjoint pairs have zero heterogeneity covariance. The heterogeneity parameter \(\tau^2\) is first estimated, for example by restricted maximum likelihood, after which \(\widehat{\tau}^2\) is used to construct \(\mathbf V(\widehat{\tau}^2)\). Conditional on the resulting covariance matrix, the NMA projection and the CSP direct, indirect, study-level, and path-level decompositions proceed exactly as under fixed effects.}

\rev{For the running example, suppose Study 1 is a three-arm trial comparing \(A\), \(B\), and \(C\), represented initially by \(y_{AB}^{(1)}=1\) and \(y_{AC}^{(1)}=2\), with covariance matrix \(\mathbf V_1=\left(\begin{smallmatrix}2&1\\1&2\end{smallmatrix}\right)\). Study 2 reports \(y_{AB}^{(2)}=2\) with variance \(2\), and Study 3 reports \(y_{AC}^{(3)}=1\) with variance \(2\). Using the reduced representation and ordering \((y_{AB}^{(1)},y_{AC}^{(1)},y_{AB}^{(2)},y_{AC}^{(3)})\),}

\[
\rev{\mathbf y=\begin{pmatrix}1\\2\\2\\1\end{pmatrix},\qquad \mathbf X\mathbf B=\begin{pmatrix}1&0\\0&1\\1&0\\0&1\end{pmatrix},\qquad \mathbf V=\begin{pmatrix}2&1&0&0\\1&2&0&0\\0&0&2&0\\0&0&0&2\end{pmatrix}.}
\]

\rev{For Study 1, the corresponding full pairwise representation is obtained using}

\[
\rev{\mathbf A_1=\begin{pmatrix}1&0\\0&1\\-1&1\end{pmatrix},\qquad \tilde{\mathbf y}_1=\mathbf A_1\mathbf y_1=\begin{pmatrix}1\\2\\1\end{pmatrix},\qquad \tilde{\mathbf V}_1=\mathbf A_1\mathbf V_1\mathbf A_1^\top=\begin{pmatrix}2&1&-1\\1&2&1\\-1&1&2\end{pmatrix}.}
\]

\rev{The full-representation within-study covariance matrix \(\tilde{\mathbf V}_1\) has rank \(2\), illustrating the singular covariance structure handled by the Moore--Penrose pseudoinverse.}

\subsubsection{Projection representation of the NMA estimator}

The NMA estimator is obtained as the constrained \rev{GLS} solution
\begin{equation}
\hat{\boldsymbol\theta}^{\NMA}
=
\arg\min_{\boldsymbol\theta}
(\mathbf y-\mathbf X\boldsymbol\theta)^\top
\mathbf V^{+}
(\mathbf y-\mathbf X\boldsymbol\theta)
\quad
\text{subject to }
\mathbf C \boldsymbol\theta = 0 .
\label{eq:cgls}
\end{equation}
When $\mathbf V$ is singular, as in the full contrast representation,
we formulate the \rev{GLS} criterion using the
Moore--Penrose pseudoinverse $\mathbf V^{+}$. When $\mathbf V$ is
nonsingular, $\mathbf V^{+} = \mathbf V^{-1}$, and the formulation
reduces to the standard \rev{GLS} problem.

\rev{Using the consistency-space parameterization
$\boldsymbol\theta=\mathbf B\boldsymbol\beta$, the constrained problem
in Equation~\eqref{eq:cgls} is equivalent to}

\[
\rev{
\widehat{\boldsymbol\beta}
=
\arg\min_{\boldsymbol\beta}
(\mathbf y-\mathbf X\mathbf B\boldsymbol\beta)^\top
\mathbf V^+
(\mathbf y-\mathbf X\mathbf B\boldsymbol\beta).
}
\]

\rev{Define the Fisher information matrix in the
$(T-1)$-dimensional consistency coordinates as}

\[
\rev{
\mathcal I_{\mathbf B}
=
\mathbf B^\top\mathbf X^\top
\mathbf V^+\mathbf X\mathbf B.
}
\]

\rev{Under the connected-network identifiability conditions assumed
throughout, $\mathcal I_{\mathbf B}$ is nonsingular, and}

\[
\rev{
\widehat{\boldsymbol\beta}
=
\mathcal I_{\mathbf B}^{-1}
\mathbf B^\top\mathbf X^\top
\mathbf V^+\mathbf y.
}
\]

\rev{Consequently, the constrained NMA estimator has the explicit
linear representation}

\begin{equation}
\rev{
\widehat{\boldsymbol\theta}^{\NMA}
=
\mathbf P\mathbf y,
\qquad
\mathbf P
=
\mathbf B
\left(
\mathbf B^\top\mathbf X^\top
\mathbf V^+\mathbf X\mathbf B
\right)^{-1}
\mathbf B^\top\mathbf X^\top
\mathbf V^+.
}
\label{eq:P_main}
\end{equation}

\rev{Because $\mathbf C\mathbf B=\mathbf 0$, this mapping explicitly satisfies}

\[
\rev{
\mathbf C\widehat{\boldsymbol\theta}^{\NMA}
=
\mathbf C\mathbf P\mathbf y
=
\mathbf 0.
}
\]

\rev{The basis $\mathbf B$ is only a coordinate device. In particular,
if $\mathbf B_*=\mathbf B\mathbf R$ for a nonsingular
$(T-1)\times(T-1)$ matrix $\mathbf R$, then substituting
$\mathbf B_*$ into Equation~\eqref{eq:P_main} produces the same mapping
$\mathbf P$. Thus, the fitted all-pairs contrasts do not depend on the
choice of reference treatment or basis for $\mathcal M$.}

Although the reduced and full contrast representations differ algebraically, they correspond to the same underlying likelihood for the observed contrasts. When the exact within-study covariance structure is correctly specified, both representations yield identical NMA estimators for the treatment effects; the distinction lies only in how the observed contrasts and their covariance structure are expressed.

\begin{proposition}[Equivalence of reduced and full contrast representations]
\label{prop:equiv_embedding}

\rev{Within this proposition, let \((\mathbf y_b,\mathbf X_b,\mathbf V_b)\) denote a reduced basic-contrast representation, and let \((\mathbf y,\mathbf X,\mathbf V)\) denote its corresponding full contrast representation obtained through the linear embedding}
\[
\mathbf y = \mathbf A \mathbf y_b,\qquad
\mathbf X = \mathbf A \mathbf X_b,\qquad
\mathbf V = \mathbf A \mathbf V_b \mathbf A^\top,
\]
where $\mathbf A$ has full column rank and $\mathbf V_b$ is
nonsingular.

Then
\[
\mathbf X^\top \mathbf V^{+} \mathbf y
=
\mathbf X_b^\top \mathbf V_b^{-1} \mathbf y_b,
\qquad
\mathbf X^\top \mathbf V^{+} \mathbf X
=
\mathbf X_b^\top \mathbf V_b^{-1} \mathbf X_b.
\]

\rev{To verify these identities, let \(\mathbf G=\mathbf A\mathbf V_b^{1/2}\). Since \(\mathbf A\) has full column rank and \(\mathbf V_b\) is a nonsingular covariance matrix, \(\mathbf G\) has full column rank and \(\mathbf V=\mathbf G\mathbf G^\top\). Hence \(\mathbf V^+=\mathbf G^{+\top}\mathbf G^+\) and \(\mathbf G^+\mathbf G=\mathbf I\). Because \(\mathbf A=\mathbf G\mathbf V_b^{-1/2}\),}
\[
\rev{
\mathbf A^\top\mathbf V^+\mathbf A
=
\mathbf V_b^{-1/2}
\mathbf G^\top\mathbf G^{+\top}
\mathbf G^+\mathbf G
\mathbf V_b^{-1/2}
=
\mathbf V_b^{-1}.
}
\]
\rev{Substituting \(\mathbf y=\mathbf A\mathbf y_b\) and \(\mathbf X=\mathbf A\mathbf X_b\) then gives the two identities above.}

\rev{Premultiplying the first identity by $\mathbf B^\top$, and
premultiplying and postmultiplying the second identity by
$\mathbf B^\top$ and $\mathbf B$, respectively, gives}
\[
\rev{
\mathbf B^\top\mathbf X^\top\mathbf V^+\mathbf y
=
\mathbf B^\top\mathbf X_b^\top
\mathbf V_b^{-1}\mathbf y_b
}
\]
\rev{and}
\[
\rev{
\mathbf B^\top\mathbf X^\top\mathbf V^+\mathbf X\mathbf B
=
\mathbf B^\top\mathbf X_b^\top
\mathbf V_b^{-1}\mathbf X_b\mathbf B.
}
\]

\rev{Therefore, the constrained estimator in
Equation~\eqref{eq:P_main}, computed from the full contrast
representation using $\mathbf V^+$, coincides exactly with that
obtained from the reduced representation using
$\mathbf V_b^{-1}$.}

\end{proposition}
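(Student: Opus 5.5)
The whole proposition reduces to one matrix identity,
\[
\mathbf A^\top\mathbf V^+\mathbf A=\mathbf V_b^{-1}.
\]
Once we have it, substituting \(\mathbf y=\mathbf A\mathbf y_b\) gives \(\mathbf X^\top\mathbf V^+\mathbf y=\mathbf X_b^\top\mathbf A^\top\mathbf V^+\mathbf A\mathbf y_b=\mathbf X_b^\top\mathbf V_b^{-1}\mathbf y_b\). Substituting \(\mathbf X=\mathbf A\mathbf X_b\) gives \(\mathbf X^\top\mathbf V^+\mathbf X=\mathbf X_b^\top\mathbf V_b^{-1}\mathbf X_b\) in the same way. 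Multiplying these by \(\mathbf B^\top\) (and by \(\mathbf B\) on the right for the information matrix) shows that both factors of \(\mathbf P\mathbf y\) in Equation~\eqref{eq:P_main} agree. Hence the two constrained estimators coincide, and so does the nonsingularity of \(\mathcal I_{\mathbf B}\).

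To prove the key identity, I factor \(\mathbf V\). Because \(\mathbf V_b\) is a nonsingular covariance matrix, it is symmetric positive definite, so it has a symmetric positive definite square root \(\mathbf V_b^{1/2}\). Set \(\mathbf G=\mathbf A\mathbf V_b^{1/2}\). This \(\mathbf G\) has full column rank, because \(\mathbf A\) does and \(\mathbf V_b^{1/2}\) is invertible, and \(\mathbf V=\mathbf G\mathbf G^\top\).

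The step that needs care is computing \(\mathbf V^+\). In general \((\mathbf G\mathbf G^\top)^+\neq \mathbf G^{+\top}\mathbf G^+\), so the full-column-rank hypothesis must be used explicitly. Since \(\mathbf G\) has full column rank, \(\mathbf G^+=(\mathbf G^\top\mathbf G)^{-1}\mathbf G^\top\) and \(\mathbf G^+\mathbf G=\mathbf I\). I would set \(\mathbf M=\mathbf G(\mathbf G^\top\mathbf G)^{-2}\mathbf G^\top\) and check the four Penrose conditions for \(\mathbf M\) against \(\mathbf G\mathbf G^\top\):
\begin{itemize}
\item \(\mathbf G\mathbf G^\top\mathbf M=\mathbf G(\mathbf G^\top\mathbf G)^{-1}\mathbf G^\top\), which is the symmetric orthogonal projector onto \(\operatorname{col}(\mathbf G)\);
\item \(\mathbf M\mathbf G\mathbf G^\top\) equals the same projector;
\item the remaining two products follow directly from these.
\end{itemize}
This gives \(\mathbf V^+=\mathbf M=\mathbf G^{+\top}\mathbf G^+\).

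Finally, I write \(\mathbf A=\mathbf G\mathbf V_b^{-1/2}\) and compute
\[
\mathbf A^\top\mathbf V^+\mathbf A=\mathbf V_b^{-1/2}(\mathbf G^+\mathbf G)^\top(\mathbf G^+\mathbf G)\mathbf V_b^{-1/2}=\mathbf V_b^{-1},
\]
using \(\mathbf G^\top\mathbf G^{+\top}=(\mathbf G^+\mathbf G)^\top=\mathbf I\). The only genuine obstacle is the pseudoinverse-of-product step above; everything else is direct substitution.
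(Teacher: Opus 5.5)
Your proof is correct and follows the paper's argument essentially step for step: factor $\mathbf V=\mathbf G\mathbf G^\top$ with $\mathbf G=\mathbf A\mathbf V_b^{1/2}$, use $\mathbf V^+=\mathbf G^{+\top}\mathbf G^+$ and $\mathbf G^+\mathbf G=\mathbf I$ to get $\mathbf A^\top\mathbf V^+\mathbf A=\mathbf V_b^{-1}$, then substitute and sandwich by $\mathbf B$. One correction to your commentary: $(\mathbf G\mathbf G^\top)^+=\mathbf G^{+\top}\mathbf G^+$ holds for every matrix $\mathbf G$ (check it via the SVD), so your Penrose verification is valid but is not where the hypothesis is used---full column rank is needed only for $\mathbf G^+\mathbf G=\mathbf I$, and without it you would get $\mathbf V_b^{-1/2}\mathbf G^+\mathbf G\,\mathbf V_b^{-1/2}\neq\mathbf V_b^{-1}$.
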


\medskip
\noindent
\rev{Proposition~\ref{prop:equiv_embedding} is an application of the standard invariance of GLS under a full-column-rank linear embedding of the observation space. Its purpose here is not to introduce a new invariance result, but to establish formally that the common full-contrast representation used in Section~\ref{sec:canonical_decomposition} preserves the fitted NMA estimator, score vector, and Fisher information.}

\rev{\textbf{Running example continued.} For the target comparison \(B{:}C\), the projection coefficients in the reduced representation are}

\[
\rev{\mathbf p_{BC}^\top=\begin{pmatrix}-\frac{2}{3}&\frac{2}{3}&-\frac{1}{3}&\frac{1}{3}\end{pmatrix},}
\]

\rev{and therefore}

\[
\rev{\widehat\theta_{BC}^{\NMA}=\mathbf p_{BC}^\top\mathbf y=-\frac{2}{3}(1)+\frac{2}{3}(2)-\frac{1}{3}(2)+\frac{1}{3}(1)=\frac{1}{3}.}
\]

\rev{Using all three pairwise contrasts for Study 1 and the pseudoinverse of the resulting singular covariance matrix gives}

\[
\rev{\tilde{\mathbf p}_{BC}^\top=\begin{pmatrix}-\frac{2}{9}&\frac{2}{9}&\frac{4}{9}&-\frac{1}{3}&\frac{1}{3}\end{pmatrix},}
\]

\rev{where the first three entries correspond to \((y_{AB}^{(1)},y_{AC}^{(1)},y_{BC}^{(1)})\). This representation also gives \(\tilde{\mathbf p}_{BC}^\top\tilde{\mathbf y}=1/3\), illustrating the exact equivalence of the reduced and full representations.}

\rev{This formulation admits a geometric interpretation in the
observation space. The fitted values are}

\[
\rev{
\widehat{\mathbf y}
=
\mathbf X\widehat{\boldsymbol\theta}^{\NMA}
=
\mathbf X\mathbf B\widehat{\boldsymbol\beta}.
}
\]

\rev{They represent the $\mathbf V^+$-orthogonal projection of
$\mathbf y$ onto the consistency-constrained fitted space
$\operatorname{col}(\mathbf X\mathbf B)$, rather than onto the larger
space $\operatorname{col}(\mathbf X)$. When $\mathbf V$ is singular,
this projection is understood on $\operatorname{col}(\mathbf V)$,
where the induced semi-inner product is nondegenerate.}

\rev{Equation~\eqref{eq:P_main} maps the observed contrast vector $\mathbf y\in\mathbb R^{n_c}$ to a fitted contrast vector $\widehat{\boldsymbol\theta}^{\NMA}\in\mathcal M$, while $\mathbf X\mathbf P$ is the corresponding projection operator in the observation space.}

\medskip
\noindent

\textbf{Inner product interpretation.}
Throughout, expressions of the form
\[
\langle \mathbf a, \mathbf b \rangle_{\mathbf V^{+}}
=
\mathbf a^\top \mathbf V^+ \mathbf b
\]
define a symmetric bilinear form on $\mathbb{R}^{n_c}$. When $\mathbf V$ is
nonsingular, this corresponds to a standard inner product. When
$\mathbf V$ is singular, the form is a semi-inner product on
$\mathbb{R}^{n_c}$, but becomes a valid inner product when restricted to
$\operatorname{col}(\mathbf V)$, which contains all admissible contrast
vectors arising from the model. All projection statements below are
understood with respect to this induced geometry on
$\operatorname{col}(\mathbf V)$.

\rev{Although the full within-study contrast representation increases the nominal dimension of the system, computation remains tractable because studies are independent and the covariance matrix is block-diagonal. For a study with \(t_k\) treatment arms, the full block contains \(m_k=\binom{t_k}{2}\) contrasts but has rank only \(t_k-1\); generalized inverses can therefore be computed study by study rather than by forming a dense pseudoinverse of the complete network covariance matrix. The subsequent canonical within-study reduction is also finite: each residual allocation exhausts at least one source or sink, requiring at most the bound given in Proposition~\ref{prop:canonical_reduction}. For the across-study path decomposition, each bottleneck extraction removes at least one positive study-labeled residual edge, so the number of extraction steps is at most the number of such edges. Path enumeration can nevertheless become increasingly costly and difficult to display in very dense networks, which motivates the aggregated study-level summaries used for the large antidepressant network.}

\subsubsection{Fisher-information geometry of the projection}

\rev{The relevant Fisher information matrix for the
consistency-constrained model is}

\[
\rev{
\mathcal I_{\mathbf B}
=
\mathbf B^\top\mathbf X^\top
\mathbf V^+\mathbf X\mathbf B.
}
\]

\rev{For the usual reference-treatment basis, this matrix has the
structure of a grounded weighted graph Laplacian. More generally, it
represents the same information operator in an arbitrary basis for the
consistency space.}

\rev{For a target comparison $a{:}b$, let
$\mathbf c_{ab}\in\mathbb R^m$ select the corresponding coordinate
from the full contrast vector, so that
$\theta_{ab}=\mathbf c_{ab}^\top\boldsymbol\theta$. Define}

\[
\rev{
\mathbf d_{ab}
=
\mathbf B^\top\mathbf c_{ab},
\qquad
\mathbf u_{ab}
=
\mathcal I_{\mathbf B}^{-1}\mathbf d_{ab}.
}
\]

\rev{The coefficient vector acting on the observed contrasts is then}

\[
\rev{
\mathbf p_{ab}
=
\mathbf V^+\mathbf X\mathbf B\mathbf u_{ab},
}
\]

\rev{and the target NMA estimate can be written as}

\[
\rev{
\widehat\theta_{ab}^{\NMA}
=
\mathbf p_{ab}^\top\mathbf y.
}
\]

\rev{This formulation has the usual minimum-energy electrical-network interpretation induced by the GLS precision operator \(\mathbf V^+\), equivalently with \(\mathbf V\) representing the covariance/resistance structure, linking the statistical model to classical graph-theoretic and electrical-network interpretations.} \rev{Importantly, this analogy should not be interpreted as a physical electrical flow. Before canonicalization, the raw full-contrast projection coefficients may contain local reverse directions or apparent cycles arising from algebraic redundancy in multi-arm contrast coordinates. These routes need not follow a common treatment-potential ordering. The canonical treatment-balance reduction removes this redundancy while preserving each study's induced linear functional. Under the standard arm-based covariance construction, Proposition~\ref{prop:acyclic_canonical_graph} shows that every canonical edge follows a common treatment-potential ordering and that the pooled canonical graph is acyclic.} The Fisher-information representation also induces an edge-level summary of how evidence for the target comparison $a{:}b$ is distributed across treatment comparisons. At the treatment level, this representation satisfies a conservation property: the total contribution flowing out of treatment $a$ equals one, the total contribution flowing into treatment $b$ equals one, and intermediate treatments have balanced incoming and outgoing contributions.

\rev{Conditional on the covariance matrix \(\mathbf V\), the mapping \(\mathbf P\) depends only on the design matrix \(\mathbf X\), the covariance structure \(\mathbf V\), and the consistency space \(\mathcal M\), and does not otherwise depend on the observed outcomes \(\mathbf y\). Under a random-effects specification in which \(\widehat{\tau}^2\) is estimated from the data, this statement is understood conditionally on the fitted covariance matrix \(\mathbf V(\widehat{\tau}^2)\). As
shown above, the mapping is invariant to the particular basis
$\mathbf B$ used to represent $\mathcal M$.} Each row of $\mathbf P$ describes how information is distributed from the observed contrasts to a target comparison, incorporating multi-arm correlation and, when present, random-effects heterogeneity through the Fisher information structure. In contrast, several existing proportion-contribution approaches apply additional transformations to projection- or hat-matrix representations to obtain nonnegative comparison-level weights. While such quantities may be useful descriptively, they no longer correspond directly to the underlying Fisher-informed linear operator. Working directly with $\mathbf P$ preserves the exact covariance structure and yields an algebraically exact and reproducible decomposition of the network estimator. While the projection-induced mapping \(\mathbf P\) is algebraically equivalent to the \rev{GLS} estimator, its role here as the primary object of analysis enables exact, covariance-aware decompositions and representation-invariant study-level contributions that are not available in standard formulations.

\rev{In particular, the variance of the fitted target comparison is}

\[
\rev{
\Var\!\left(\widehat\theta_{ab}^{\NMA}\right)
=
\mathbf d_{ab}^\top
\mathcal I_{\mathbf B}^{-1}
\mathbf d_{ab}
=
\mathbf c_{ab}^\top
\mathbf B
\mathcal I_{\mathbf B}^{-1}
\mathbf B^\top
\mathbf c_{ab}.
}
\]
\rev{For the standard treatment-reference basis, this quantity
coincides with the effective resistance between treatments $a$ and $b$
in the corresponding weighted network.}

\subsubsection{\rev{Generalization of pairwise inverse-variance weights}}

A hallmark of traditional pairwise meta-analysis is the transparency
provided by inverse-variance weights, typically displayed as
percentages in a forest plot to denote each study's contribution to
the pooled estimate \parencite{hedges1985statistical}. NMA
lacks an exact analogue because the estimator combines direct and
indirect evidence through a \rev{GLS} system that
incorporates multi-arm covariance.

The contrast-space projection framework provides a natural
generalization of these weights. Because the NMA estimator admits the
linear representation
\[
\hat{\boldsymbol\theta}^{\NMA} = \mathbf P \mathbf y,
\]
the coefficients $p_{ij}$ describe how each observed contrast $y_j$
contributes to the network estimate $\hat{\theta}_i^{\NMA}$.

In the special case of a degenerate network consisting of a single
pairwise comparison with $k$ independent studies, the design matrix
reduces to $\mathbf X=\mathbf 1$ and the covariance matrix $\mathbf V$
is diagonal. The \rev{GLS} estimator then becomes

\[
\hat{\theta}
=
\frac{\sum_{i=1}^k \sigma_i^{-2} y_i}{\sum_{i=1}^k \sigma_i^{-2}},
\]

which is the classical inverse-variance weighted mean. In this case the projection representation reduces exactly to the familiar relative weights $\sigma_i^{-2} / \sum_i \sigma_i^{-2}$.

\subsubsection{Structural direct and indirect decomposition}

Let $\mathbf p_{ab}^\top$ denote the row of the projection-induced mapping $\mathbf P$ corresponding to the target comparison $a{:}b$, and let
$j$ index the observed contrasts in the network. The \rev{NMA} estimator can be written as
\[
\hat\theta_{ab}^{\NMA} = \sum_j p_{ab,j} y_j .
\]
The projection coefficients may take negative values and are therefore interpreted as signed contributions rather than proportions.

To distinguish \emph{structural} direct and indirect contributions, let
$D_{ab}$ denote the set of indices corresponding to observed contrasts
that directly compare treatments $a$ and $b$. The estimator can then be
decomposed as
\[
\hat\theta_{ab}^{\NMA}
=
\sum_{j\in D_{ab}} p_{ab,j} y_j
+
\sum_{j\notin D_{ab}} p_{ab,j} y_j .
\]

The first term represents contributions from direct comparisons of
$a$ and $b$, while the second term aggregates contributions from all
remaining contrasts, corresponding to indirect evidence transmitted
through the network.

This decomposition follows directly from the projection representation
and provides an exact algebraic separation of direct and indirect
evidence at the level of the estimator. This decomposition is exact and model-based, requiring no approximation
or reparameterization. The definition is induced purely
by the structure of the projection-induced mapping $\mathbf P$ and depends only
on how observed contrasts are indexed in the network representation.
In particular, if the comparison $a{:}b$ is not observed directly in the
network, then $D_{ab}=\varnothing$ and the estimator is entirely
determined by indirect evidence. This structural definition aligns with formulations that express network
estimates as linear combinations of direct comparisons
\parencite{Papakonstantinou2018}, where the comparison $(a,b)$
corresponds to direct evidence and all other comparisons contribute
indirectly.

In many applications, however, direct and indirect evidence are defined
at the study level. For a target comparison \(a{:}b\), direct evidence
refers to information arising from studies that include both treatments
\(a\) and \(b\), whereas indirect evidence is obtained by combining
results across multiple studies and cannot be obtained from any single
study alone \parencite{Salanti2011, Dias2013, Hutton2015}. This distinction highlights two fundamentally different decompositions of the same projection representation: a structural decomposition defined at the level of treatment contrasts, which depends on how contrasts are represented, and a study-based decomposition that partitions contributions according to study membership and should be invariant to within-study parameterization. These two perspectives are not equivalent in general, particularly in the presence of multi-arm trials. While the structural decomposition is algebraically exact, it depends on the representation of the observed contrasts and is therefore not invariant to within-study parameterization, motivating the development of a rigorous formulation of study-level contributions within the contrast-space projection framework in the following subsection. \rev{The same study-based definition of direct and indirect evidence is used in the symmetric side-splitting model of \textcite{White2015} and the evidence-splitting approach of \textcite{ShihTu2021}. These methods were developed to assess inconsistency between direct and indirect evidence, rather than to decompose the fitted standard NMA estimate into direct and indirect contributions.}

\rev{The mapping $\mathbf P$, which defines the network estimator through its linear representation, does more than assign weights to observed contrasts.} Because it acts on the full contrast space, it provides the algebraic mechanism for separating structural direct from indirect evidence and for identifying representation-dependent ambiguity at the study level. \rev{The same projection geometry also provides, through nested fitted spaces, a direct representation of the established decomposition \(Q_{\mathrm{net}}=Q_{\mathrm{het}}+Q_{\mathrm{inc}}\).} The canonical construction developed in the following subsection builds directly on this projection geometry.

\subsection{Canonical study-based decomposition and path representation}
\label{sec:canonical_decomposition}

We develop a study-based, representation-invariant decomposition of the
NMA estimator into study-level contributions and
path-level components. As discussed above, while the projection
representation induces an exact contrast-based (structural)
decomposition at the level of treatment contrasts, this decomposition
is not invariant to within-study parameterization in the presence of
multi-arm trials. 

A central difficulty is that multi-arm trials admit multiple
algebraically equivalent representations, which can lead to different
allocations of a study’s contribution into direct and indirect
components. The canonical construction resolves this ambiguity by
removing within-study redundancy and defining a unique decomposition
that is invariant to the choice of within-study representation.

To motivate the need for a canonical decomposition, consider a simple
three-treatment network with treatments $A$, $B$, and $C$, and suppose
the target comparison is $B{:}C$. In the full pairwise contrast
representation, the observed data vector $\mathbf y$ may include the
direct comparison $B{:}C$ as well as contrasts such as $A{:}B$ and
$A{:}C$, yielding both a structural direct component and an indirect
pathway $B \leftarrow A \rightarrow C$. In contrast, under a baseline
parameterization (e.g., taking $A$ as reference), only the contrasts
$A{:}B$ and $A{:}C$ are represented explicitly, so the same network
estimate for $B{:}C$ appears to arise entirely through structural
indirect evidence, with no explicit direct component.

These representations are algebraically equivalent and yield identical
network estimates, but they induce different \emph{structural}
decompositions, since the notion of “direct” and “indirect” depends on
how contrasts are encoded in $\mathbf y$. Consequently, the resulting
decomposition of the estimator into direct and indirect contributions,
as well as any derived proportions or weights, can vary across
equivalent parameterizations.

In particular, the classification of a study contribution as “direct”
or “indirect” depends on how the within-study contrasts are
parameterized, including whether the target comparison appears
explicitly among the retained contrasts and how its contribution is
represented. Even when the target comparison appears explicitly among the retained contrasts, its contribution need not be fully represented within the direct component, as part of its effect may be redistributed through indirect pathways. Because this depends on the choice of within-study parameterization
used to represent multi-arm trials in standard NMA,
such a decomposition is not uniquely determined  and is therefore not invariant to the choice of within-study representation.

This structural notion differs from the \emph{study-based} interpretation,
in which direct evidence for $B{:}C$ arises solely from studies that
include both treatments $B$ and $C$, and indirect evidence arises from
combining information across multiple studies. Under this study-based
perspective, the decomposition—and in particular the relative magnitude
of direct and indirect contributions—should be invariant.

\rev{To resolve this issue, we develop a canonical decomposition that (i) defines study-level contributions that are invariant to within-study parameterization, (ii) removes algebraic redundancy through a unique within-study representation, and (iii) induces a well-defined network flow whose decomposition into individual paths is uniquely selected by the stated deterministic path-extraction rule.}

\rev{The construction proceeds in four steps. First, the projection
induces an exact decomposition of the network estimator into
study-level contributions. Second, each study-specific coefficient
functional is summarized by its treatment-level balance vector, which
is invariant to the representation of the within-study contrasts.
Third, the maximum feasible contribution to the target direct
comparison is extracted, and the remaining balances are assigned by a deterministic lexicographic max--min rule. Finally, the resulting study-labeled indirect edges are assembled across studies and decomposed into paths from the target source treatment to the target sink treatment.}

\rev{The study-based direct/indirect split itself is uniquely defined. For a target comparison \(a{:}b\), a study provides direct evidence when it contains both treatments \(a\) and \(b\); in the treatment-balance representation, the corresponding direct component is the uniquely determined maximum feasible coefficient on the target edge \(a\to b\), and the remaining contribution is indirect. The additional deterministic path-extraction rule is required only to resolve this uniquely defined indirect contribution into individual study-labeled paths.}

\subsubsection{\rev{Study-level mapping in the full contrast representation}}

For a target comparison $a{:}b$, the NMA estimator
admits the linear representation
\[
\hat\theta_{ab}^{\NMA}
=
\mathbf p_{ab}^\top \mathbf y,
\]
where $\mathbf p_{ab}$ is the corresponding row of the mapping operator $\mathbf P$.

Partitioning the observed contrasts by study,
\[
\mathbf y =
(\mathbf y_1^\top,\dots,\mathbf y_K^\top)^\top,
\qquad
\mathbf p_{ab}^\top =
(\mathbf p_{ab,1}^\top,\dots,\mathbf p_{ab,K}^\top),
\]
yields
\[
\hat\theta_{ab}^{\NMA}
=
\sum_{k=1}^K
\mathbf p_{ab,k}^\top \mathbf y_k,
\qquad
C_{ab}^{(k)} = \mathbf p_{ab,k}^\top \mathbf y_k.
\]

\medskip

To obtain a representation-invariant study-level decomposition, we work in a common full contrast representation as a unifying framework. This choice serves two key purposes. First, it provides a unified ambient space in which all equivalent within-study parameterizations can be embedded, since any reduced or baseline representation can be expressed as a linear transformation of the full set of pairwise contrasts. Second, it ensures that all observed comparisons are represented explicitly within a single coordinate system, avoiding the need to reconstruct implicit contrasts through parameterization-dependent linear combinations. 

\rev{Here we introduce the tilde notation to distinguish the full pairwise representation constructed specifically for the canonical decomposition from the generic input representation \(\mathbf y_k\), which may itself be reduced or full.} For each study $k$ with treatment set $S_k$, let $\tilde{\mathbf y}_k$ denote the vector of all pairwise within-study contrasts among treatments in $S_k$. For example, consider a three-arm study comparing treatments $A$, $B$, and $C$, with interest in the comparison $B{:}C$. In a baseline parameterization, the observed data may be recorded as $\mathbf y_k = (y_{AB}, y_{AC})^\top$, in which case $B{:}C$ is represented indirectly through $y_{BC} = y_{AC} - y_{AB}$. The full contrast representation $\tilde{\mathbf y}_k = (y_{AB}, y_{AC}, y_{BC})^\top$ provides a unified coordinate system in which all pairwise comparisons are represented explicitly. If the observed contrasts are already recorded in this form, then $\tilde{\mathbf y}_k = \mathbf y_k$; otherwise, $\tilde{\mathbf y}_k$ is obtained as a linear transformation of $\mathbf y_k$.

Let $\tilde{\mathbf y}_k = \mathbf A_k \mathbf y_k$ with covariance
\[
\tilde{\mathbf V}_k = \mathbf A_k \mathbf V_k \mathbf A_k^\top.
\]

Stacking $\tilde{\mathbf y}_k$ across studies yields
\[
\tilde{\mathbf y}
=
(\tilde{\mathbf y}_1^\top,\dots,\tilde{\mathbf y}_K^\top)^\top.
\]
\rev{Let $\tilde{\mathbf X}$ and $\tilde{\mathbf V}$ denote the
corresponding design and covariance matrices. Using the same basis
$\mathbf B$ for the global consistency space, define}

\[
\rev{
\tilde{\mathbf P}
=
\mathbf B
\left(
\mathbf B^\top
\tilde{\mathbf X}^\top
\tilde{\mathbf V}^{+}
\tilde{\mathbf X}
\mathbf B
\right)^{-1}
\mathbf B^\top
\tilde{\mathbf X}^\top
\tilde{\mathbf V}^{+}.
}
\]

Then the corresponding study-level coefficient vectors satisfy
\[
\mathbf p_{ab,k}^\top
=
\tilde{\mathbf p}_{ab,k}^\top \mathbf A_k,
\qquad
\tilde{\mathbf p}_{ab,k}^\top \tilde{\mathbf y}_k
=
\mathbf p_{ab,k}^\top \mathbf y_k.
\]

Thus, the study-level contribution can equivalently be written as
\[
C_{ab}^{(k)} =
\tilde{\mathbf p}_{ab,k}^\top \tilde{\mathbf y}_k,
\]
showing that it is preserved under transformation to the full contrast representation. Because all within-study parameterizations can be embedded into the same full contrast representation, the quantities $\tilde{\mathbf y}_k$, $\tilde{\mathbf p}_{ab,k}$, and hence $C_{ab}^{(k)}$ are defined in a common ambient space. This removes the dimension mismatch induced by study-specific parameterizations and provides a unified representation that serves as the starting point for the canonical decomposition.

\rev{Let
\(m_k=\binom{|S_k|}{2}\), and let
\(\mathbf D_k\in\mathbb R^{|S_k|\times m_k}\) denote the oriented
treatment--contrast incidence matrix for study \(k\). For the coordinate
corresponding to the oriented contrast \(u{:}v\), the associated column
of \(\mathbf D_k\) has \(+1\) in the row for treatment \(u\), \(-1\) in
the row for treatment \(v\), and zero elsewhere. The admissible
within-study contrast-value space is}

\[
\rev{
\mathcal M_k
=
\operatorname{col}(\mathbf D_k^\top).
}
\]

\rev{Thus, every admissible within-study contrast vector can be written
as}

\[
\rev{
\tilde{\mathbf v}_k
=
\mathbf D_k^\top\boldsymbol\alpha
}
\]

\rev{for some treatment-level vector \(\boldsymbol\alpha\). In
contrast, the coefficient vector
\(\tilde{\mathbf p}_{ab,k}\) belongs to the ambient coefficient space
\(\mathbb R^{m_k}\). Its action on admissible within-study contrasts is
determined completely by the treatment-level balance vector}

\[
\rev{
\mathbf b_{ab,k}
=
\mathbf D_k\tilde{\mathbf p}_{ab,k}.
}
\]

\rev{In particular, another coefficient vector
\(\mathbf z\in\mathbb R^{m_k}\) represents the same linear functional
on \(\mathcal M_k\) if and only if}

\[
\rev{
\mathbf D_k\mathbf z
=
\mathbf D_k\tilde{\mathbf p}_{ab,k}.
}
\]

\rev{Indeed, for every
\(\tilde{\mathbf v}_k=\mathbf D_k^\top\boldsymbol\alpha\in\mathcal M_k\),}

\[
\rev{
\mathbf z^\top\tilde{\mathbf v}_k
-
\tilde{\mathbf p}_{ab,k}^\top\tilde{\mathbf v}_k
=
\boldsymbol\alpha^\top
\mathbf D_k
\left(
\mathbf z-\tilde{\mathbf p}_{ab,k}
\right).
}
\]

\rev{Consequently, equality of the treatment-level balance vectors is equivalent to equality of the two induced linear functionals for all admissible within-study contrasts.} 

\rev{The Moore--Penrose pseudoinverse is used only to compute the GLS estimator in the singular full-contrast coordinates. By Proposition~\ref{prop:equiv_embedding}, the resulting fitted NMA estimator is identical to that obtained from the nonsingular reduced representation using an ordinary inverse. The minimum-norm property of the Moore--Penrose representative therefore does not define the study contribution. Rather, the induced within-study linear functional, and hence the study contribution, is determined by the treatment-level balance vector \(\mathbf b_{ab,k}\); every coefficient vector with the same balance vector gives the same contribution for every admissible within-study contrast.}

\subsubsection{Canonical study-based direct and indirect decomposition}

\rev{Although the study-specific coefficient vector
\(\tilde{\mathbf p}_{ab,k}\) determines a unique linear functional on
the admissible contrast space \(\mathcal M_k\), coefficient vectors that
differ by an element of \(\ker(\mathbf D_k)\) induce the same
functional. Such differences correspond to within-study circulations
or algebraically redundant routes. We therefore define a canonical
representative using the treatment-level balance vector}

\[
\rev{
\mathbf b_{ab,k}
=
\mathbf D_k\tilde{\mathbf p}_{ab,k}.
}
\]

\medskip
\noindent
\rev{\textbf{Path-based feasible representations.}
For a directed path}

\[
\rev{
\gamma:
v_0\to v_1\to\cdots\to v_m,
}
\]

\rev{define its coefficient vector as}

\[
\rev{
\mathbf e(\gamma)
=
\sum_{r=0}^{m-1}
\mathbf e_{v_rv_{r+1}},
}
\]

\rev{where \(\mathbf e_{uv}\in\mathbb R^{m_k}\) denotes the signed
coordinate vector for the directed contrast \(u{:}v\). It equals the standard basis vector when \(u{:}v\) agrees with the stored orientation of that treatment pair and its negative when the stored orientation is \(v{:}u\). Thus, \(\mathbf e_{vu}=-\mathbf e_{uv}\). Define the path cone in the ambient coefficient space by}

\[
\rev{
\mathcal C_k^{\mathrm{path}}
=
\left\{
\mathbf z\in\mathbb R^{m_k}:
\mathbf z
=
\sum_{\ell}w_\ell\mathbf e(\gamma_\ell),
\quad
w_\ell\geq 0
\right\}.
}
\]

\rev{The outcome-independent feasible set for the study-specific
canonical representation is}

\[
\rev{
\mathcal F_{ab,k}
=
\left\{
\mathbf z\in\mathcal C_k^{\mathrm{path}}:
\mathbf D_k\mathbf z
=
\mathbf b_{ab,k}
\right\}.
}
\]

\rev{The constraint therefore preserves the induced linear functional
for every admissible within-study contrast, rather than only for the
observed value of \(\tilde{\mathbf y}_k\).}

\medskip
\noindent
\rev{\textbf{Lexicographic minimum-redundancy criterion.}
Let}

\[
\rev{
\mathcal D
=
\left\{
(\gamma_0,w_0),
(\gamma_1,w_1),
\ldots,
(\gamma_L,w_L)
\right\}
}
\]

\rev{denote an ordered feasible path representation, and define its
weighted path length as}

\[
\rev{
L(\mathcal D)
=
\sum_{\ell=0}^{L}
w_\ell\,\operatorname{length}(\gamma_\ell).
}
\]

\rev{The primary criterion minimizes \(L(\mathcal D)\). Among all
minimum-length representations, the coefficient \(w_0\) assigned to
the target edge \(a\to b\) is maximized. After removal of this direct
allocation, the remaining source--sink allocations are selected
recursively by maximizing the next feasible transfer. Exact ties are
resolved using a fixed global ordering \(\rho(u,v)\) of treatment
pairs.}

\rev{Equivalently, the canonical representation minimizes the ordered
criterion}

\[
\rev{
\mathcal R_{\mathrm{lex}}(\mathcal D)
=
\left(
L(\mathcal D),
-w_0,
-w_1,\rho(u_1,v_1),
-w_2,\rho(u_2,v_2),
\ldots,
-w_L,\rho(u_L,v_L)
\right)
}
\]

\rev{in lexicographic order.}

\medskip
\noindent
\rev{\textbf{Algorithmic realization.}
Write}

\[
\rev{
s_{u,k}
=
(b_{ab,k,u})_+,
\qquad
d_{u,k}
=
(-b_{ab,k,u})_+,
}
\]

\rev{where \((x)_+=\max(x,0)\). The quantities \(s_{u,k}\) and
\(d_{u,k}\) are respectively the treatment-level supply and demand
induced by the study-specific coefficient functional.}

\rev{The maximum feasible direct allocation to the target comparison is}

\[
\rev{
w_{ab,k}
=
\min\{s_{a,k},d_{b,k}\},
}
\]

\rev{with \(w_{ab,k}=0\) when either treatment is absent from study
\(k\) or the corresponding supply or demand is zero. This amount is
assigned first to the edge \(a\to b\) and removed from the residual
balances.}

\rev{At residual iteration \(r\), let \(s_u^{(r)}\) and \(d_v^{(r)}\)
denote the remaining supplies and demands. For every feasible
source--sink pair define}

\[
\rev{
\lambda_{uv}^{(r)}
=
\min\left\{
s_u^{(r)},
d_v^{(r)}
\right\}.
}
\]

\rev{The selected pair is}

\[
\rev{
(u_r,v_r)
=
\operatorname*{arg\,max}_{u,v}
\lambda_{uv}^{(r)},
}
\]

\rev{with exact ties resolved according to the fixed ordering
\(\rho(u,v)\). The coefficient}

\[
\rev{
w_r
=
\lambda_{u_rv_r}^{(r)}
}
\]

\rev{is assigned to the endpoint edge \(u_r\to v_r\), after which the
corresponding residual supply and demand are updated. The procedure is
repeated until all balances are exhausted.}

\rev{Define the canonical direct coefficient vector by}

\[
\rev{
\mathbf p_{ab,k}^{\mathrm{dir}}
=
\begin{cases}
w_{ab,k}\mathbf e_{ab},
&
\{a,b\}\subseteq S_k,
\\[1mm]
\mathbf 0_{m_k},
&
\text{otherwise}.
\end{cases}
}
\]

\rev{The resulting canonical coefficient vector and its complementary
indirect component are}

\[
\rev{
\mathbf p_{ab,k}^{\mathrm{can}}
=
\mathbf p_{ab,k}^{\mathrm{dir}}
+
\sum_{r=1}^{L}
w_r\mathbf e_{u_rv_r},
\qquad
\mathbf p_{ab,k}^{\mathrm{ind}}
=
\mathbf p_{ab,k}^{\mathrm{can}}
-
\mathbf p_{ab,k}^{\mathrm{dir}}.
}
\]

\rev{The corresponding study-level direct contribution is}

\[
\rev{
C_{ab,k}^{\mathrm{dir}}
=
\mathbf p_{ab,k}^{\mathrm{dir}\top}
\tilde{\mathbf y}_k
=
\begin{cases}
w_{ab,k}\tilde y_{ab,k},
&
\{a,b\}\subseteq S_k,
\\[1mm]
0,
&
\text{otherwise},
\end{cases}
}
\]

\rev{and the indirect contribution is}

\[
\rev{
C_{ab,k}^{\mathrm{ind}}
=
\mathbf p_{ab,k}^{\mathrm{ind}\top}
\tilde{\mathbf y}_k.
}
\]

\begin{proposition}[Outcome-independent canonical reduction]
\label{prop:canonical_reduction}

\rev{The canonical coefficient vector satisfies}

\[
\rev{
\mathbf D_k\mathbf p_{ab,k}^{\mathrm{can}}
=
\mathbf D_k\tilde{\mathbf p}_{ab,k}.
}
\]

\rev{Consequently,}

\[
\rev{
\mathbf p_{ab,k}^{\mathrm{can}\top}
\tilde{\mathbf v}_k
=
\tilde{\mathbf p}_{ab,k}^{\top}
\tilde{\mathbf v}_k
\qquad
\text{for every }
\tilde{\mathbf v}_k\in\mathcal M_k.
}
\]

\rev{The minimum weighted path length is}

\[
\rev{
\min_{\mathcal D}L(\mathcal D)
=
\sum_u s_{u,k}
=
\frac{1}{2}
\left\|
\mathbf b_{ab,k}
\right\|_1.
}
\]

\rev{Among all representations attaining this minimum, the
direct-first lexicographic max--min rule with fixed tie-breaking
selects a unique canonical representation.}

\end{proposition}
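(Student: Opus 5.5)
The proof is essentially a transportation argument on the treatment-balance vector $\mathbf b_{ab,k}$, carried out in four steps.

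\textbf{Step 1: zero total balance.} Each column of $\mathbf D_k$ has one $+1$ and one $-1$, so $\mathbf 1^\top\mathbf D_k=\mathbf 0$. Hence $\sum_u b_{ab,k,u}=0$, and therefore
\[
\sum_u s_{u,k}=\sum_u d_{u,k}=\tfrac12\|\mathbf b_{ab,k}\|_1 .
\]

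\textbf{Step 2: the algorithm preserves balances and terminates.} I would show that every step of the algorithm, including the initial direct step, reduces exactly one supply and one demand by the common amount $w_r=\min\{s_{u_r}^{(r)},d_{v_r}^{(r)}\}$. The edge vector $\mathbf e_{u_rv_r}$ satisfies $\mathbf D_k\mathbf e_{u_rv_r}=\mathbf 1_{u_r}-\mathbf 1_{v_r}$. So by induction the invariant
\[
\mathbf D_k\Big(\text{allocated part}\Big)+\big(\text{residual } s-d\big)=\mathbf b_{ab,k}
\]
holds at every iteration. Each step exhausts at least one source or sink, so the procedure terminates after at most $|\{u:s_u>0\}|+|\{v:d_v>0\}|-1\le|S_k|-1$ endpoint allocations beyond the direct one. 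At termination all residuals are zero, which gives $\mathbf D_k\mathbf p^{\mathrm{can}}_{ab,k}=\mathbf b_{ab,k}=\mathbf D_k\tilde{\mathbf p}_{ab,k}$. The functional identity on $\mathcal M_k$ then follows immediately from the identity $\mathbf z^\top\mathbf D_k^\top\boldsymbol\alpha-\tilde{\mathbf p}^\top\mathbf D_k^\top\boldsymbol\alpha=\boldsymbol\alpha^\top\mathbf D_k(\mathbf z-\tilde{\mathbf p})$ established just before the proposition.

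\textbf{Step 3: the minimum weighted length.}
\begin{itemize}
\item \emph{Lower bound.} Take any feasible $\mathcal D=\{(\gamma_\ell,w_\ell)\}$. Since $\mathbf D_k\mathbf e(\gamma_\ell)=\mathbf 1_{\mathrm{start}}-\mathbf 1_{\mathrm{end}}$, we have $\mathbf b_{ab,k}=\sum_\ell w_\ell(\mathbf 1_{\mathrm{start}(\gamma_\ell)}-\mathbf 1_{\mathrm{end}(\gamma_\ell)})$. Taking positive parts coordinatewise gives $s_{u,k}\le\sum_{\ell:\mathrm{start}=u}w_\ell$. Summing over $u$ yields $\sum_u s_{u,k}\le\sum_\ell w_\ell\le\sum_\ell w_\ell\,\mathrm{length}(\gamma_\ell)$, because every nontrivial path has length at least one. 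Paths of length zero contribute nothing and can be discarded.
\item \emph{Upper bound.} The algorithmic representation uses only single-edge paths, which exist because study $k$ contains all pairs in $S_k$. Its total weight is $\sum_r w_r$. Each unit of allocation consumes one unit of supply, and all supply is exhausted, so $\sum_r w_r=\sum_u s_{u,k}$.
\end{itemize}
Together these give the claimed equality.

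\textbf{Step 4: uniqueness.} This is where I expect the main obstacle. Attaining equality in the lower bound forces three things:
\begin{itemize}
\item every path has length one;
\item each edge $u\to v$ with positive weight has $u$ a pure source and $v$ a pure sink, i.e. no cancellation, since otherwise the positive-part inequality would be strict;
\item the weights form a nonnegative transport plan from $s$ to $d$.
\end{itemize}
So the minimizers are exactly the transport plans between the supply and demand vectors, a polytope.

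The lexicographic criterion then resolves the choice stage by stage:
\begin{enumerate}
\item Maximizing $w_0$ over this polytope gives exactly $\min\{s_a,d_b\}$, since $w_0$ is feasible up to that bound and cannot exceed it.
\item At each subsequent stage the maximal transferable amount on any single pair is $\max_{u,v}\min\{s_u^{(r)},d_v^{(r)}\}$, and this is attained.
\item The fixed ordering $\rho$ selects one pair among exact ties.
\end{enumerate}
The ordered tuple $\mathcal R_{\mathrm{lex}}$ is thus determined entry by entry. Any representation achieving the lexicographic optimum must therefore coincide with the greedy one at every stage, by induction on $r$ with identical residuals.

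Some care is needed to interpret the representation as an ordered list and to treat merging of repeated edges as canonical. I would state explicitly that representations are compared after combining identical edges in the order of their first extraction, so that the criterion is well defined.
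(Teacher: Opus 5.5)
Your proof is correct and follows the same route as the paper's. You get balance preservation from the equal supply--demand transfers, the transport lower bound $\sum_u s_{u,k}$ attained by endpoint edges, and uniqueness by fixing the lexicographic tuple stage by stage. Your positive-part inequality and your identification of the length-minimizers as single-edge transport plans from pure sources to pure sinks make rigorous what the paper only sketches: that the ``maximum feasible'' transfer at each stage is exactly $\min\{s_u^{(r)},d_v^{(r)}\}$.
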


\rev{For the first result, the canonical construction preserves the treatment-level balance by construction. The direct allocation and each subsequent residual source--sink transfer remove equal amounts of supply and demand, and the procedure continues until all balances are exhausted. Hence \(\mathbf D_k\mathbf p_{ab,k}^{\mathrm{can}}=\mathbf b_{ab,k}=\mathbf D_k\tilde{\mathbf p}_{ab,k}\). Consequently, for any \(\tilde{\mathbf v}_k=\mathbf D_k^\top\boldsymbol\alpha\in\mathcal M_k\),}
\[
\rev{
\left(
\mathbf p_{ab,k}^{\mathrm{can}}
-
\tilde{\mathbf p}_{ab,k}
\right)^\top
\tilde{\mathbf v}_k
=
\boldsymbol\alpha^\top
\mathbf D_k
\left(
\mathbf p_{ab,k}^{\mathrm{can}}
-
\tilde{\mathbf p}_{ab,k}
\right)
=
0.
}
\]

\rev{For the second result, every feasible representation must transport
the total supply
\(\sum_u s_{u,k}=\frac12\|\mathbf b_{ab,k}\|_1\). Each unit transported
along a path of length \(m\) contributes \(m\) units to
\(L(\mathcal D)\), so no feasible representation can have weighted
length smaller than the total supply. The endpoint-edge construction
attains this lower bound.}

\rev{The feasible set is convex because it is defined by a convex cone
and affine treatment-balance constraints. However, the primary
weighted-length objective is linear and need not have a unique
minimizer. Uniqueness instead follows from the lexicographic
refinement. The direct allocation is first fixed at its maximum feasible
value. At each subsequent step, the maximum feasible residual transfer
and the fixed tie-breaking order uniquely determine the selected pair,
its coefficient, and the next residual balance vector. Each iteration
exhausts at least one source or one sink, so the procedure terminates
after at most}

\[
\rev{
|\{u:s_{u,k}>0\}|
+
|\{v:d_{v,k}>0\}|
-1
}
\]

\rev{residual allocations.}

\rev{Because \(\tilde{\mathbf p}_{ab,k}\) is determined by the corresponding block of the projection mapping \(\tilde{\mathbf P}\), and because \(\mathbf D_k\) and the ordering \(\rho\) are fixed independently of the observed contrast values, all canonical direct, residual, and path weights depend only on the design, covariance structure, consistency space, and fixed ordering. In particular, conditional on the specified covariance matrix, none of these weights depends on \(\tilde{\mathbf y}_k\).} \rev{Under a random-effects specification, however, the fitted covariance matrix depends on the estimated heterogeneity variance \(\widehat{\tau}^2\), so the resulting contribution weights inherit uncertainty from its estimation. Accordingly, ``exact'' refers to exact algebraic reconstruction of the fitted NMA estimate conditional on \(\mathbf V(\widehat{\tau}^2)\), rather than to absence of statistical uncertainty in the estimated weights.}

\subsubsection{Canonical coefficient weights}

The scalar $w_{ab,k}$ defines the \emph{study-based canonical direct weight}
of study $k$ for the target comparison $a{:}b$.

\rev{Let
\(\mathcal K_{ab}
=
\{k:\{a,b\}\subseteq S_k\}\)
denote the set of studies containing both target treatments.}

Under the projection representation, the estimator admits the
decomposition
\[
\rev{
\hat\theta_{ab}^{\NMA}
=
\sum_{k\in\mathcal K_{ab}}
w_{ab,k}\tilde y_{ab,k}
+
C_{ab}^{\mathrm{ind}},
}
\]
where $\tilde y_{ab,k}$ denotes the study-specific contrast
for $a{:}b$ in the full contrast representation and
\[
\rev{
C_{ab}^{\mathrm{dir}}
=
\sum_{k\in\mathcal K_{ab}}
w_{ab,k}\tilde y_{ab,k},
}
\]
is the total study-based canonical direct contribution, while $C_{ab}^{\mathrm{ind}}$
denotes the remaining canonical indirect contribution.

The coefficients satisfy the normalization identity
\[
\rev{
\sum_{k\in\mathcal K_{ab}}
w_{ab,k}
+
w_{ab}^{\mathrm{ind}}
=
1,
}
\]
where
\[
\rev{
w_{ab}^{\mathrm{ind}}
=
1-
\sum_{k\in\mathcal K_{ab}}
w_{ab,k},
}
\]
is the total canonical weight assigned to indirect evidence.

Thus, the projection can be expressed as an affine combination of study-based canonical direct and indirect components on the coefficient scale, while
$C_{ab}^{\mathrm{dir}}$ and $C_{ab}^{\mathrm{ind}}$ represent the
corresponding additive contributions to the linear functional
$\hat\theta_{ab}^{\NMA}$. The indirect component can be further decomposed into path components within the same canonical framework.

\subsubsection{Canonical study-level path decomposition}

The canonical coefficient representation developed above provides a
graph-based summary of the projection-induced mapping and its induced
evidence propagation. Related interpretations have been developed using
electrical-network analogies \parencite{Rucker2012},
projection-based formulations \parencite{Koenig2013}, and path-based
flow representations \parencite{Papakonstantinou2018, Davies2022}. In
contrast to these approaches, which are typically constructed from
marginal pairwise summaries, the present formulation derives all such
quantities directly from the \rev{GLS} mapping
\(\mathbf P\) and therefore preserves the full covariance structure
induced by multi-arm trials.

\rev{We emphasize that these flow-like quantities are algebraic
summaries of the projection-induced mapping rather than physical
transport processes. Study labels must be retained throughout the
decomposition because different studies may contribute coefficients
with opposite orientations to the same treatment comparison.
Aggregating such edges before path construction could introduce
cancellation and would no longer preserve the study-level
decomposition of the fitted network estimator.}

\rev{Under the within-study canonicalization described in the preceding
section, each study-specific coefficient functional is represented by
a nonnegative collection of directed endpoint edges from treatments
with positive balance to treatments with negative balance. The maximum
feasible target edge \(a\to b\) is removed first as direct evidence,
and the remaining endpoint edges constitute the canonical indirect
coefficient vector}

\[
\rev{
\mathbf p_{ab,k}^{\mathrm{ind}}
=
\mathbf p_{ab,k}^{\mathrm{can}}
-
\mathbf p_{ab,k}^{\mathrm{dir}}.
}
\]

\rev{These study-specific residual edges are uniquely determined by the treatment-balance constraint and the lexicographic max--min rule. An individual residual edge need not begin at treatment \(a\) or terminate at treatment \(b\).}

\rev{The following result uses the standard arm-level covariance construction for a multi-arm trial. Let \(\widehat{\boldsymbol\eta}_k\) be the vector of independent arm-specific estimates, with}
\[
\rev{
\operatorname{Cov}(\widehat{\boldsymbol\eta}_k)
=
\boldsymbol\Sigma_k,
}
\]
\rev{where \(\boldsymbol\Sigma_k\) is positive diagonal. With the orientation convention above, \(\tilde{\mathbf y}_k=-\mathbf D_k^\top\widehat{\boldsymbol\eta}_k\). Consequently, the full within-study contrast covariance is}
\[
\rev{
\tilde{\mathbf V}_k
=
\mathbf D_k^\top\boldsymbol\Sigma_k\mathbf D_k.
}
\]
\rev{The same result applies whenever the total covariance used in the GLS projection, including a modeled heterogeneity component, retains this form.}

\begin{proposition}[\rev{Acyclicity and exhaustion of the canonical path graph}]
\label{prop:acyclic_canonical_graph}

\rev{Suppose that
\(\tilde{\mathbf V}_k
=
\mathbf D_k^\top\boldsymbol\Sigma_k\mathbf D_k\)
for every study, where \(\boldsymbol\Sigma_k\) is positive diagonal. Then every nonzero canonical study-labeled edge \(u\to v\) strictly decreases a common target-specific treatment potential. Consequently, the pooled canonical graph is acyclic. After removal of the canonical direct edges, all remaining positive coefficients can be exhausted by directed paths from \(a\) to \(b\), and no residual directed cycle can remain.}

\end{proposition}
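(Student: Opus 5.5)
The plan is to exhibit the common potential explicitly as the electrical potential of the target comparison. Take \(\boldsymbol\varphi\in\mathbb R^T\) to be treatment-level potentials with \(\mathbf B\mathbf u_{ab}\) equal to the vector of potential differences, i.e.\ \(\theta_{uv}\)-coordinates \(\varphi_u-\varphi_v\) (up to the paper's orientation sign), so that \(\tilde{\mathbf X}_k\mathbf B\mathbf u_{ab}=\mathbf D_k^\top\boldsymbol\varphi_k\), where \(\boldsymbol\varphi_k\) is the restriction of \(\boldsymbol\varphi\) to \(S_k\). The coefficient block is then \(\tilde{\mathbf p}_{ab,k}=\tilde{\mathbf V}_k^+\mathbf D_k^\top\boldsymbol\varphi_k\). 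Consequently
\[
\mathbf b_{ab,k}=\mathbf D_k\tilde{\mathbf V}_k^+\mathbf D_k^\top\boldsymbol\varphi_k .
\]
If the sign convention is reversed, I would replace \(\boldsymbol\varphi\) by \(-\boldsymbol\varphi\) throughout.

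The key computation, and the step I expect to be the main obstacle, is identifying \(\mathbf L_k:=\mathbf D_k\tilde{\mathbf V}_k^+\mathbf D_k^\top\) under \(\tilde{\mathbf V}_k=\mathbf D_k^\top\boldsymbol\Sigma_k\mathbf D_k\). I would follow the argument of Proposition~\ref{prop:equiv_embedding} with \(\mathbf G=\mathbf D_k^\top\boldsymbol\Sigma_k^{1/2}\). This gives \(\tilde{\mathbf V}_k^+=\mathbf G^{+\top}\mathbf G^+\), and \(\mathbf G^+\mathbf G\) is the orthogonal projector onto \(\operatorname{row}(\mathbf G)\), which is the orthogonal complement of \(\boldsymbol\Sigma_k^{-1/2}\mathbf 1\). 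Writing \(\mathbf D_k=\boldsymbol\Sigma_k^{-1/2}\mathbf G^\top\) then yields
\[
\mathbf L_k=\boldsymbol\Sigma_k^{-1}-\frac{\boldsymbol\Sigma_k^{-1}\mathbf 1\mathbf 1^\top\boldsymbol\Sigma_k^{-1}}{\mathbf 1^\top\boldsymbol\Sigma_k^{-1}\mathbf 1}.
\]
Put \(\omega_u=(\boldsymbol\Sigma_k)_{uu}^{-1}>0\) and \(\bar\varphi_k=\sum_{u\in S_k}\omega_u\varphi_u/\sum_{u\in S_k}\omega_u\). Then
\[
b_{ab,k,u}=\omega_u(\varphi_u-\bar\varphi_k).
\]
Hence \(s_{u,k}>0\) if and only if \(\varphi_u>\bar\varphi_k\), and \(d_{v,k}>0\) if and only if \(\varphi_v<\bar\varphi_k\).

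Every nonzero canonical edge \(u\to v\) has \(u\) a supply node and \(v\) a demand node. This holds for the direct edge, since \(w_{ab,k}>0\) requires \(s_{a,k},d_{b,k}>0\), and for each residual edge by the max--min rule. It therefore satisfies \(\varphi_u>\bar\varphi_k>\varphi_v\), so the potential strictly decreases along it. A directed cycle in the pooled study-labeled graph would force \(\varphi\) to strictly decrease around a closed loop, which is impossible, so the pooled graph is acyclic.

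For exhaustion, I would sum the balances over studies. The total is \(\sum_k\mathbf b_{ab,k}=\big(\sum_k\mathbf L_k\big)\boldsymbol\varphi\), and by the normal equations defining \(\mathbf u_{ab}\) (i.e.\ \(\mathcal I_{\mathbf B}\mathbf u_{ab}=\mathbf B^\top\mathbf c_{ab}\), the grounded-Laplacian system) this equals \(\mathbf e_a-\mathbf e_b\). This step requires carefully matching \(\mathcal I_{\mathbf B}\) with the summed \(\mathbf L_k\) via the basis \(\mathbf B\). Removing the direct edges, each of which carries equal weight out of \(a\) and into \(b\), leaves a nonnegative edge weighting with net divergence \(w^{\mathrm{ind}}_{ab}(\mathbf e_a-\mathbf e_b)\) and conservation at every intermediate treatment. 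The standard flow-decomposition theorem then writes any such nonnegative flow as a sum of source--sink paths plus directed cycles. Acyclicity rules out the cycles, so all remaining positive coefficients are exhausted by directed \(a\to b\) paths. Each bottleneck extraction zeros at least one edge, so this extraction terminates.
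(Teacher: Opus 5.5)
Your proof is correct and follows the paper's argument step for step: the same global potential with \(\tilde{\mathbf X}_k\mathbf B\mathbf u_{ab}=\mathbf D_k^\top\boldsymbol\varphi_k\), the same identity \(b_{ab,k,u}=\omega_u(\varphi_u-\bar\varphi_k)\), the same supply/demand sign argument giving a strict potential decrease along every edge and hence acyclicity, the total balance \(\mathbf e_a-\mathbf e_b\), and exhaustion of the acyclic residual by \(a\to b\) paths. The differences are minor and concern three sub-steps:
\begin{itemize}
\item You obtain \(\mathbf b_{ab,k}\) by computing \(\mathbf D_k\tilde{\mathbf V}_k^+\mathbf D_k^\top\) explicitly. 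This correctly uses \((\mathbf G\mathbf G^\top)^+=\mathbf G^{+\top}\mathbf G^+\) without needing full column rank. The paper instead uses \(\operatorname{col}(\tilde{\mathbf V}_k)=\operatorname{col}(\mathbf D_k^\top)\) together with \(\ker(\mathbf D_k^\top)=\operatorname{span}\{\mathbf 1\}\).
\item The step you flagged as delicate, the total balance, the paper settles by testing against every global treatment vector \(\boldsymbol\alpha\). It uses the fact that \(\tilde{\mathbf p}_{ab}\) reproduces \(\alpha_a-\alpha_b\) on consistent data. This avoids matching \(\mathcal I_{\mathbf B}\) to the summed Laplacian through \(\mathbf B\) and handling the grounding direction \(\mathbf 1\).
\item You invoke the standard flow-decomposition theorem, where the paper traces graph-theoretic sources and sinks directly.
\end{itemize}
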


\rev{\emph{Proof.}
Because \(\mathbf B\mathbf u_{ab}\) is consistency-compatible, there exists a global treatment-potential vector \(\boldsymbol\phi_{ab}\), unique up to an additive constant, such that}
\[
\rev{
\tilde{\mathbf X}_k\mathbf B\mathbf u_{ab}
=
\mathbf D_k^\top\boldsymbol\phi_{ab,k},
}
\]
\rev{where \(\boldsymbol\phi_{ab,k}\) is its restriction to study \(k\). Using \(\tilde{\mathbf V}_k=\mathbf D_k^\top\boldsymbol\Sigma_k\mathbf D_k\),}
\[
\rev{\mathbf b_{ab,k}=\mathbf D_k\tilde{\mathbf V}_k^+\mathbf D_k^\top\boldsymbol\phi_{ab,k}.}
\]
\rev{Because \(\boldsymbol\Sigma_k\) is positive definite and the within-study contrast graph is connected, \(\operatorname{col}(\tilde{\mathbf V}_k)=\operatorname{col}(\mathbf D_k^\top)\). Hence}
\[
\rev{
\mathbf D_k^\top
\left(
\boldsymbol\Sigma_k\mathbf b_{ab,k}
-
\boldsymbol\phi_{ab,k}
\right)
=
\mathbf 0.
}
\]
\rev{Since
\(\ker(\mathbf D_k^\top)=\operatorname{span}\{\mathbf 1\}\)
and
\(\mathbf 1^\top\mathbf b_{ab,k}=0\), it follows that}
\[
\rev{
\mathbf b_{ab,k}
=
\mathbf W_k
\left(
\boldsymbol\phi_{ab,k}
-
\bar{\phi}_{ab,k}\mathbf 1
\right),
\qquad
\mathbf W_k=\boldsymbol\Sigma_k^{-1},
}
\]
\rev{where}
\[
\rev{
\bar{\phi}_{ab,k}
=
\frac{
\mathbf 1^\top\mathbf W_k\boldsymbol\phi_{ab,k}
}{
\mathbf 1^\top\mathbf W_k\mathbf 1
}.
}
\]

\rev{Because \(\mathbf W_k\) is positive diagonal,
\(b_{ab,k,u}>0\) implies
\(\phi_{ab,u}>\bar{\phi}_{ab,k}\), while
\(b_{ab,k,v}<0\) implies
\(\phi_{ab,v}<\bar{\phi}_{ab,k}\). Every canonical edge is directed from a positive-balance treatment to a negative-balance treatment; therefore,}
\[
\rev{
u\to v
\quad\Longrightarrow\quad
\phi_{ab,u}>
\bar{\phi}_{ab,k}>
\phi_{ab,v}.
}
\]
\rev{Thus, the same global potential strictly decreases along every pooled canonical edge. A directed cycle would require}
\[
\rev{
\phi_{ab,v_1}>
\phi_{ab,v_2}>
\cdots>
\phi_{ab,v_L}>
\phi_{ab,v_1},
}
\]
\rev{which is impossible.}

\rev{Let \(\mathbf E_k\) embed the treatment coordinates of study \(k\) into the global treatment space. Exactness of the target projection implies}
\[
\rev{
\sum_{k=1}^K
\mathbf E_k\mathbf b_{ab,k}
=
\mathbf e_a-\mathbf e_b.
}
\]
\rev{Indeed, for every global treatment vector \(\boldsymbol\alpha\),}
\[
\rev{
\boldsymbol\alpha^\top
\sum_{k=1}^K\mathbf E_k\mathbf b_{ab,k}
=
\sum_{k=1}^K
\tilde{\mathbf p}_{ab,k}^\top
\mathbf D_k^\top\mathbf E_k^\top\boldsymbol\alpha
=
\boldsymbol\alpha^\top(\mathbf e_a-\mathbf e_b).
}
\]

\rev{Let \(w_{ab}^{\mathrm{dir}}=\sum_{k\in\mathcal K_{ab}}w_{ab,k}\). Removing the direct \(a\to b\) edges leaves residual treatment balance \((1-w_{ab}^{\mathrm{dir}})(\mathbf e_a-\mathbf e_b)\). Thus every intermediate treatment has zero net balance. Because the residual graph is acyclic, any graph-theoretic source incident to a positive edge must have strictly positive net balance, and any graph-theoretic sink incident to a positive edge must have strictly negative net balance. Therefore \(a\) and \(b\) are respectively the only possible source and sink of the nonzero residual graph. Tracing any positive edge backward must consequently terminate at \(a\), and tracing it forward must terminate at \(b\). Hence every positive residual edge belongs to an \(a\to b\) path. Each bottleneck subtraction removes at least one positive edge, so repeated extraction terminates and exhausts all residual coefficients. \(\square\)}

\rev{The exact canonical reduction does not require the arm-based covariance condition; that condition is used only for the stronger acyclicity result.}

\rev{To obtain complete indirect paths, the canonical residual endpoint edges are pooled across studies while retaining their study labels. By Proposition~\ref{prop:acyclic_canonical_graph}, the resulting directed multigraph is acyclic and every positive residual edge belongs to an \(a\to b\) path. Among the available paths, the path having the largest bottleneck coefficient is selected, with exact ties resolved by a fixed deterministic ordering of treatments and study labels. The bottleneck coefficient is subtracted from every study-labeled edge on that path, and the procedure is repeated until all indirect coefficients are exhausted. The implementation additionally checks that no positive residual coefficient remains after extraction; this condition was satisfied for every target comparison in the three empirical applications.}

\rev{Thus, the within-study lexicographic max--min reduction and the
across-study indirect-path extraction are distinct operations. The
first uniquely defines the direct and residual study-edge coefficients;
the second assembles those fixed study-labeled residual edges into
complete indirect paths from \(a\) to \(b\).}

A \emph{study-level path} \(\mathcal P\) from \(a\) to \(b\) is an
ordered sequence of such study-specific path segments,
\[
\mathcal P = \bigl((k_1,v_0\to v_1),\dots,(k_L,v_{L-1}\to v_L)\bigr),
\qquad
a=v_0,\; v_L=b,
\]
or equivalently,
\[
\mathcal P = (k_1,\dots,k_L),
\qquad
(a=v_0,v_1,\dots,v_L=b),
\]
where study \(k_\ell\) contributes the directed comparison
\(v_{\ell-1}\to v_\ell\).

Each path \(\mathcal P\) is associated with a nonnegative coefficient
\(w_{\mathcal P}\), defined as the coefficient extracted for that path
under this procedure. The corresponding path-specific contrast is
\[
\delta_{\mathcal P}
=
\sum_{\ell=1}^L y_{v_{\ell-1},v_\ell}^{(k_\ell)},
\]
that is, the sum of the directed study-level contrasts along the path.
The path-level contribution is therefore
\[
C_{ab}^{(\mathcal P)} = w_{\mathcal P}\,\delta_{\mathcal P}.
\]

\rev{Because the within-study endpoint coefficients are uniquely
selected by the direct-first lexicographic max--min rule, and because
the subsequent across-study path extraction uses a fixed deterministic
ordering, the resulting canonical collection of study-labeled paths is
uniquely determined. The indirect contribution therefore has the exact
decomposition}

\[
\rev{
C_{ab}^{\mathrm{ind}}
=
\sum_{\mathcal P}
C_{ab}^{(\mathcal P)}.
}
\]

\subsubsection{Interpretation and invariance}

The canonical decomposition yields an exact and representation-invariant
decomposition of the network estimator for a target comparison \(a{:}b\):
\[
\hat\theta_{ab}^{\NMA}
=
C_{ab}^{\mathrm{dir}} + C_{ab}^{\mathrm{ind}},
\]
where
\[
\rev{
C_{ab}^{\mathrm{dir}}
=
\sum_{k\in\mathcal K_{ab}}
w_{ab,k}\tilde y_{ab,k},
}
\qquad
C_{ab}^{\mathrm{ind}} = \sum_{\mathcal P} w_{\mathcal P} \, \delta_{\mathcal P}.
\]

Here \(\tilde y_{ab,k}\) denotes the study-level contrast for \(a{:}b\)
in the full contrast representation, \(w_{ab,k}\) is the corresponding
canonical direct weight, and \(\delta_{\mathcal P}\) is the path-specific
contrast associated with study-level path \(\mathcal P\), with coefficient
\(w_{\mathcal P}\).

The coefficients satisfy the normalization identity
\[
\rev{
\sum_{k\in\mathcal K_{ab}}
w_{ab,k}
+
\sum_{\mathcal P}w_{\mathcal P}
=
1,
}
\]
so that the estimator admits a canonical decomposition as a weighted
combination of direct study-level contrasts and indirect path-level
contrasts.

Equivalently, defining the aggregated canonical direct and indirect components
\[
\hat\theta_{ab}^{\mathrm{dir}}
=
\frac{C_{ab}^{\mathrm{dir}}}{w_{ab}^{\mathrm{dir}}},
\qquad
\hat\theta_{ab}^{\mathrm{ind}}
=
\frac{C_{ab}^{\mathrm{ind}}}{w_{ab}^{\mathrm{ind}}},
\]
with
\[
\rev{ w_{ab}^{\mathrm{dir}} = \sum_{k\in\mathcal K_{ab}} w_{ab,k}, }
\qquad
w_{ab}^{\mathrm{ind}} = \sum_{\mathcal P} w_{\mathcal P},
\]
the network estimator can be written as
\[
\hat\theta_{ab}^{\NMA}
=
w_{ab}^{\mathrm{dir}} \, \hat\theta_{ab}^{\mathrm{dir}}
+
w_{ab}^{\mathrm{ind}} \, \hat\theta_{ab}^{\mathrm{ind}},
\]
providing a direct analogue of inverse-variance weighted averaging in
pairwise meta-analysis. \rev{The normalized direct estimate is defined only when \(w_{ab}^{\mathrm{dir}}>0\), and the normalized indirect estimate is defined only when \(w_{ab}^{\mathrm{ind}}>0\). If the corresponding weight is zero, that evidence component is absent and its normalized estimate is undefined.}

This decomposition satisfies the following properties:

\begin{itemize}
\item Study contributions \(C_{ab}^{(k)}\) are invariant to within-study parameterization;
\item \rev{The direct-first lexicographic max--min reduction yields a unique canonical study-level representation;}
\item Direct weights \(w_{ab,k}\) quantify study-level contributions;
\item Path weights \(w_{\mathcal P}\) quantify indirect contributions through study-level paths;
\item \rev{The complete study-path decomposition is uniquely selected by the stated within-study and across-study deterministic rules and is invariant to equivalent within-study contrast parameterizations.}
\end{itemize}

Thus, each direct study and each indirect path is assigned a canonical
weight, and these weights provide a complete and invariant
characterization of how evidence contributes to the network estimator.

\rev{\textbf{Running example continued.} For Study 1, the full-representation coefficient block \((-2/9,\,2/9,\,4/9)^\top\) has the same treatment-level balance as the canonical coefficient vector \((0,\,0,\,2/3)^\top\). The three-arm study therefore receives canonical direct weight \(2/3\) for \(B{:}C\), even though \(B{:}C\) was not retained explicitly in its baseline representation. The coefficient \(-1/3\) on \(y_{AB}^{(2)}\) gives the study-labeled edge \(B\to A\), and the coefficient \(1/3\) on \(y_{AC}^{(3)}\) gives \(A\to C\); together they form the indirect path \(B\to A\to C\) with weight \(1/3\). Hence,}

\[
\rev{\widehat\theta_{BC}^{\NMA}=\underbrace{\frac{2}{3}y_{BC}^{(1)}}_{\text{direct Study 1}}+\underbrace{\frac{1}{3}\left(-y_{AB}^{(2)}+y_{AC}^{(3)}\right)}_{\text{indirect path }B\to A\to C}=\frac{2}{3}(1)+\frac{1}{3}(-1)=\frac{1}{3}.}
\]

\rev{Thus, the direct and indirect weights are \(2/3\) and \(1/3\), respectively, and the canonical contributions exactly reproduce the fitted NMA estimate under either within-study representation.}

\subsection{Graphical and diagnostic tools under the CSP framework}
\label{sec:tools}

The canonical study-path decomposition arising from the CSP framework yields a set of graphical and quantitative tools for interpreting NMA results. \rev{All contribution quantities in this section are derived from the same fitted NMA model and covariance structure and therefore provide decompositions and diagnostic views of the reported NMA results rather than alternative analyses. The generalized Cochran \(Q\) decomposition uses the same observed contrasts and covariance structure together with an auxiliary design-specific projection solely to partition the global statistic into \(Q_{\mathrm{het}}\) and \(Q_{\mathrm{inc}}\); this auxiliary projection does not define a competing set of network treatment-effect estimates. The study- and path-level contributions reconstruct each fitted NMA estimate exactly, while under the fixed-effects specification the global generalized Cochran statistic is decomposed exactly as \(Q_{\mathrm{net}}=Q_{\mathrm{het}}+Q_{\mathrm{inc}}\).} These tools provide information on evidence contributions, the structure of indirect pathways, agreement between evidence sources, within-design heterogeneity, and between-design inconsistency.

\subsubsection{Forest plot for a target comparison}

Forest plots are widely used in classical pairwise meta-analysis to
display study-level estimates together with their relative
contributions. The canonical decomposition provides, to our knowledge, the first direct extension
of this idea to NMA by expressing the NMA estimator as a
sum of study-level and path-level components.

Each direct-study or indirect-path row in the forest plot is displayed on the treatment-effect scale, together with its canonical weight, in direct analogy with a conventional pairwise-meta-analysis forest plot. \rev{Let \(\delta_r=\mathbf r_r^\top\tilde{\mathbf y}\) denote the estimate displayed for component \(r\), where \(\mathbf r_r\) is its unweighted signed contrast vector, and let \(w_r\) denote its canonical weight. The additive contribution of that component to the NMA estimator is \(C_r=w_r\delta_r=\mathbf q_r^\top\tilde{\mathbf y}\), where \(\mathbf q_r=w_r\mathbf r_r\). Thus, the plotted component estimates themselves are not added directly; their weight-multiplied contributions reconstruct the fitted NMA estimate. For any two components \(r\) and \(s\),}
\[
\rev{\Var(\delta_r)=\mathbf r_r^\top\tilde{\mathbf V}\mathbf r_r,\qquad \Var(C_r)=\mathbf q_r^\top\tilde{\mathbf V}\mathbf q_r,\qquad \Cov(C_r,C_s)=\mathbf q_r^\top\tilde{\mathbf V}\mathbf q_s.}
\]
\rev{These expressions retain all within-study covariance terms, including those arising when different components use contrasts from the same multi-arm study.}
\medskip
\noindent
\textbf{Canonical direct components.}
For a study \(k\) that includes both treatments \(a\) and \(b\), the estimate displayed in the forest plot is the study-level contrast
\[
\delta_{ab,k}^{\mathrm{dir}}=\tilde y_{ab,k},
\]
with canonical weight \(w_{ab,k}\). Its additive contribution to the NMA estimator is
\[
C_{ab,k}^{\mathrm{dir}}=w_{ab,k}\tilde y_{ab,k}.
\]
If \(\Var(\tilde y_{ab,k})=\sigma_{ab,k}^2\), the displayed \(100(1-\alpha)\%\) confidence interval is
\[
\tilde y_{ab,k}\pm z_{1-\alpha/2}\sigma_{ab,k}.
\]
\medskip
\noindent
\textbf{Canonical indirect components (paths).}
For each study-level path \(\mathcal P\), the estimate displayed in the forest plot is
\[
\delta_{\mathcal P}=\sum_{\ell=1}^L y_{v_{\ell-1},v_\ell}^{(k_\ell)},
\]
with canonical weight \(w_{\mathcal P}\), so that its additive contribution is
\[
C_{ab}^{(\mathcal P)}=w_{\mathcal P}\delta_{\mathcal P}.
\]
\rev{Let \(\mathbf r_{\mathcal P}\) denote the unweighted signed coefficient vector defining the path contrast, so that \(\delta_{\mathcal P}=\mathbf r_{\mathcal P}^\top\tilde{\mathbf y}\). Then}
\[
\rev{\Var(\delta_{\mathcal P})=\mathbf r_{\mathcal P}^\top\tilde{\mathbf V}\mathbf r_{\mathcal P},}
\]
\rev{which includes covariance between multiple contrasts from the same study. The displayed confidence interval is}
\[
\rev{\delta_{\mathcal P}\pm z_{1-\alpha/2}\sqrt{\mathbf r_{\mathcal P}^\top\tilde{\mathbf V}\mathbf r_{\mathcal P}}.}
\]

\medskip
\noindent
\textbf{Aggregated components.}
The total direct and indirect contributions are
\[
C_{ab}^{\mathrm{dir}} = \sum_{k=1}^K C_{ab,k}^{\mathrm{dir}},
\qquad
C_{ab}^{\mathrm{ind}} = \sum_{\mathcal P} C_{ab}^{(\mathcal P)}.
\]

\rev{Let}

\[
\rev{
\mathbf q_{ab}^{\mathrm{dir}}
=
\sum_k\mathbf q_{ab,k}^{\mathrm{dir}},
\qquad
\mathbf q_{ab}^{\mathrm{ind}}
=
\sum_{\mathcal P}\mathbf q_{\mathcal P}.
}
\]

\rev{The aggregated variances are}

\[
\rev{
\Var(C_{ab}^{\mathrm{dir}})
=
\mathbf q_{ab}^{\mathrm{dir}\top}
\tilde{\mathbf V}
\mathbf q_{ab}^{\mathrm{dir}},
}
\]

\rev{and}

\[
\rev{
\begin{aligned}
\Var(C_{ab}^{\mathrm{ind}})
&=
\sum_{\mathcal P}
\Var\!\left(C_{ab}^{(\mathcal P)}\right)
+
2\sum_{\mathcal P<\mathcal P'}
\Cov\!\left(
C_{ab}^{(\mathcal P)},
C_{ab}^{(\mathcal P')}
\right) \\
&=
\mathbf q_{ab}^{\mathrm{ind}\top}
\tilde{\mathbf V}
\mathbf q_{ab}^{\mathrm{ind}}.
\end{aligned}
}
\]

\rev{The covariance between the aggregated direct and indirect
components is}

\[
\rev{
\Cov(C_{ab}^{\mathrm{dir}},C_{ab}^{\mathrm{ind}})
=
\mathbf q_{ab}^{\mathrm{dir}\top}
\tilde{\mathbf V}
\mathbf q_{ab}^{\mathrm{ind}}.
}
\]

\medskip
\noindent
\textbf{Summary estimates.}
While the decomposition is expressed in terms of contributions
\(C_{ab}^{\mathrm{dir}}\) and \(C_{ab}^{\mathrm{ind}}\), forest plots display the corresponding normalized estimates when their respective weights are positive
\[
\hat\theta_{ab}^{\mathrm{dir}}
=
\frac{C_{ab}^{\mathrm{dir}}}{w_{ab}^{\mathrm{dir}}},
\qquad
\hat\theta_{ab}^{\mathrm{ind}}
=
\frac{C_{ab}^{\mathrm{ind}}}{w_{ab}^{\mathrm{ind}}},
\]
where
\[
\rev{ w_{ab}^{\mathrm{dir}} = \sum_{k\in\mathcal K_{ab}} w_{ab,k}, }
\qquad
w_{ab}^{\mathrm{ind}} = \sum_{\mathcal P} w_{\mathcal P}.
\]

\rev{Their variances are}
\[
\Var(\hat\theta_{ab}^{\mathrm{dir}})
=
\frac{\Var(C_{ab}^{\mathrm{dir}})}{(w_{ab}^{\mathrm{dir}})^2},
\qquad
\Var(\hat\theta_{ab}^{\mathrm{ind}})
=
\frac{\Var(C_{ab}^{\mathrm{ind}})}{(w_{ab}^{\mathrm{ind}})^2}.
\]

The corresponding confidence intervals are
\[
\hat\theta_{ab}^{\mathrm{dir}}
\;\pm\;
z_{1-\alpha/2}\sqrt{\Var(\hat\theta_{ab}^{\mathrm{dir}})},
\qquad
\hat\theta_{ab}^{\mathrm{ind}}
\;\pm\;
z_{1-\alpha/2}\sqrt{\Var(\hat\theta_{ab}^{\mathrm{ind}})}.
\]
\medskip
\noindent
\textbf{Network estimate.}
The overall estimate is
\[
\hat\theta_{ab}^{\NMA}
=
C_{ab}^{\mathrm{dir}} + C_{ab}^{\mathrm{ind}},
\]
\rev{with variance}

\[
\rev{
\begin{aligned}
\Var(\hat\theta_{ab}^{\NMA})
&=
\Var(C_{ab}^{\mathrm{dir}})
+
\Var(C_{ab}^{\mathrm{ind}})
+
2\Cov(C_{ab}^{\mathrm{dir}},C_{ab}^{\mathrm{ind}}) \\
&=
\left(
\mathbf q_{ab}^{\mathrm{dir}}
+
\mathbf q_{ab}^{\mathrm{ind}}
\right)^\top
\tilde{\mathbf V}
\left(
\mathbf q_{ab}^{\mathrm{dir}}
+
\mathbf q_{ab}^{\mathrm{ind}}
\right).
\end{aligned}
}
\]

\medskip
\noindent
\rev{Thus, each direct-study or indirect-path row displays the component estimate \(\delta_r\) and its confidence interval, while the weight column displays \(w_r\). The product \(w_r\delta_r=C_r\) is the additive contribution entering the exact reconstruction of the NMA estimate. The summary rows display the overall NMA estimate and the normalized direct and indirect estimates on the treatment-effect scale.}

In practice, the number of indirect paths can be large, especially in
dense networks, making a full path-level display difficult to interpret.
In such cases, it is often more informative to report the aggregated
indirect contribution alongside the direct components. Under a random-effects specification, confidence intervals are constructed using the estimated covariance matrix $\mathbf V(\hat{\tau}^2)$ and are therefore approximate, reflecting plug-in estimation of the heterogeneity parameter.

\rev{\textbf{Running example continued.} The direct contribution has variance \((2/3)^2\Var(y_{BC}^{(1)})=8/9\), while the indirect contribution has variance \((1/3)^2\{\Var(y_{AB}^{(2)})+\Var(y_{AC}^{(3)})\}=4/9\). Because the direct and indirect components use different studies in this example, their covariance is zero. Consequently,}

\[
\rev{\Var\!\left(\widehat\theta_{BC}^{\NMA}\right)=\frac{8}{9}+\frac{4}{9}=\frac{4}{3}.}
\]

\rev{The normalized direct and indirect estimates are \(\widehat\theta_{BC}^{\mathrm{dir}}=1\) and \(\widehat\theta_{BC}^{\mathrm{ind}}=-1\), with variances \(2\) and \(4\), respectively.}

\subsubsection{Path-based decomposition and visualization}

The canonical decomposition represents the contribution to the network
estimator as a collection of study-level paths connecting \(a\) to \(b\),
including both direct and indirect components. Each path corresponds to
a specific sequence of study-specific contrasts through which evidence
propagates across the network, providing a structured representation of
how information is combined across studies.

For visualization, each path is displayed as a connected directed line
from \(a\) to \(b\). Direct evidence corresponds to a single edge,
while indirect evidence is represented by paths passing through one or
more intermediate treatments. Each segment of a path is labeled by the
study contributing that comparison, so that the full path is represented
as an ordered sequence of study-specific contrasts. This makes explicit
which studies contribute to each stage of the evidence propagation.

The associated weights are explicitly shown to represent the
decomposition of the network estimator, with all path weights summing
to one. These weights quantify the relative contribution of each path,
allowing the network estimate to be interpreted as a weighted
combination of study-level routes. Paths with larger weights correspond
to dominant pathways of evidence, while smaller weights indicate
secondary or supporting contributions.

This visualization provides a transparent view of the structure of
indirect evidence by separating distinct routes through which
information flows from \(a\) to \(b\). Unlike treatment-level summaries,
this representation preserves study-level contributions and avoids
cancellation or distortion arising from aggregation across studies.
As a result, the path-based display remains fully consistent with the
underlying NMA estimator.

In practice, the number of paths may be large in dense networks. In such
cases, visualization can be simplified by displaying only the most
influential paths (for example, those with the largest weights), while
retaining the complete decomposition in the underlying analysis.

\subsubsection{Tension plot: direct versus indirect evidence}

The canonical decomposition enables a direct comparison between the
aggregated direct and indirect components of the network estimator
across multiple treatment contrasts.

For each target comparison \(a{:}b\), the normalized direct and indirect
estimates \(\hat\theta_{ab}^{\mathrm{dir}}\) and
\(\hat\theta_{ab}^{\mathrm{ind}}\), together with the overall estimate
\(\hat\theta_{ab}^{\NMA}\), are defined as in the canonical decomposition.

A tension plot displays these three estimates for a collection of
treatment comparisons, typically arranged with one row per contrast.
Each row contains \(\hat\theta_{ab}^{\mathrm{dir}}\),
\(\hat\theta_{ab}^{\mathrm{ind}}\), and
\(\hat\theta_{ab}^{\NMA}\), together with their corresponding confidence
intervals.

To aid interpretation, the symbols representing the estimates may be
scaled proportionally to the associated weights (e.g.,
\(w_{ab}^{\mathrm{dir}}\) and \(w_{ab}^{\mathrm{ind}}\)), so that larger
symbols indicate components with greater influence on the network
estimate.

Because all quantities arise from the same linear decomposition, the three estimates for each comparison are algebraically linked. \rev{Close agreement between direct and indirect components suggests consistency, whereas systematic discrepancies across contrasts reflect tension between evidence sources. In the running example, a tension plot would display the direct estimate \(1\), the indirect estimate \(-1\), and their weighted NMA estimate \(1/3\), thereby making the disagreement between the two evidence sources explicit.} 

\subsubsection{\rev{Generalized Cochran \(Q\): global, heterogeneity, and inconsistency components}}
\label{sec:qdecomp}

\rev{Let \(\widehat{\mathbf y}_{\mathrm C}=\mathbf X\widehat{\boldsymbol\theta}^{\NMA}=\mathbf X\mathbf B\widehat{\boldsymbol\beta}\) denote the fitted values under the homogeneous consistency-constrained NMA model. The generalized Cochran \(Q\) statistic for the whole network is}
\[
\rev{
Q_{\mathrm{net}}
=
\left(\mathbf y-\widehat{\mathbf y}_{\mathrm C}\right)^\top
\mathbf V^+
\left(\mathbf y-\widehat{\mathbf y}_{\mathrm C}\right).
}
\]
\rev{Under the fixed-effects specification considered here, \(Q_{\mathrm{net}}\) measures the total departure from the homogeneous consistency-constrained network model. Following the generalized Cochran \(Q\) decomposition of \textcite{Krahn2013}, and its connection with the design-by-treatment interaction framework \parencite{Higgins2012,White2012}, we decompose}
\[
\rev{
Q_{\mathrm{net}}
=
Q_{\mathrm{het}}
+
Q_{\mathrm{inc}},
}
\]
\rev{where \(Q_{\mathrm{het}}\) measures within-design heterogeneity and \(Q_{\mathrm{inc}}\) measures between-design inconsistency.}

\rev{Let \(d=1,\ldots,D\) index the distinct study designs, where a design is the complete set of treatments included in a study. Let \(\mathcal T_d\) denote the treatment set of design \(d\), with \(t_d=|\mathcal T_d|\). In the same observation coordinates as \(\mathbf y\), let \(\mathbf Z\) denote the design-specific model matrix obtained by assigning each design its own \(t_d-1\) independent treatment-effect coordinates, shared by all studies having that design but unconstrained across different designs. Define}
\[
\rev{
p_D=\operatorname{rank}(\mathbf Z).
}
\]
\rev{Under the usual full-rank within-design parameterization, \(p_D=\sum_{d=1}^D(t_d-1)\). Because globally consistent treatment effects induce a valid set of effects within every design,}
\[
\rev{
\operatorname{col}(\mathbf X\mathbf B)
\subseteq
\operatorname{col}(\mathbf Z).
}
\]

\rev{Let the fitted values under the design-specific model be}
\[
\rev{
\widehat{\mathbf y}_{\mathrm D}
=
\mathbf Z\widehat{\boldsymbol\gamma}_{\mathrm D},
\qquad
\widehat{\boldsymbol\gamma}_{\mathrm D}
=
\left(\mathbf Z^\top\mathbf V^+\mathbf Z\right)^{-1}
\mathbf Z^\top\mathbf V^+\mathbf y,
}
\]
\rev{under a full-rank design-specific parameterization. The design-specific fit is introduced only as an intermediate projection for partitioning \(Q_{\mathrm{net}}\); it does not replace the consistency-constrained NMA fit or define a competing set of network treatment-effect estimates.}

\rev{The within-design heterogeneity statistic is}
\[
\rev{
Q_{\mathrm{het}}
=
\left(\mathbf y-\widehat{\mathbf y}_{\mathrm D}\right)^\top
\mathbf V^+
\left(\mathbf y-\widehat{\mathbf y}_{\mathrm D}\right),
}
\]
\rev{and the between-design inconsistency statistic is}
\[
\rev{
Q_{\mathrm{inc}}
=
\left(\widehat{\mathbf y}_{\mathrm D}-\widehat{\mathbf y}_{\mathrm C}\right)^\top
\mathbf V^+
\left(\widehat{\mathbf y}_{\mathrm D}-\widehat{\mathbf y}_{\mathrm C}\right).
}
\]

\rev{Because \(\widehat{\mathbf y}_{\mathrm D}\) is the \(\mathbf V^+\)-weighted projection onto \(\operatorname{col}(\mathbf Z)\), the residual \(\mathbf y-\widehat{\mathbf y}_{\mathrm D}\) is orthogonal in the \(\mathbf V^+\) geometry to \(\operatorname{col}(\mathbf Z)\). Since \(\widehat{\mathbf y}_{\mathrm D}-\widehat{\mathbf y}_{\mathrm C}\in\operatorname{col}(\mathbf Z)\),}
\[
\rev{
\left(\mathbf y-\widehat{\mathbf y}_{\mathrm D}\right)^\top
\mathbf V^+
\left(\widehat{\mathbf y}_{\mathrm D}-\widehat{\mathbf y}_{\mathrm C}\right)
=
0.
}
\]
\rev{Therefore,}
\[
\rev{
\boxed{
Q_{\mathrm{net}}
=
Q_{\mathrm{het}}
+
Q_{\mathrm{inc}}
}.
}
\]
\rev{Thus, the established generalized Cochran \(Q\) decomposition has a direct Pythagorean interpretation in the CSP geometry: \(Q_{\mathrm{het}}\) is the residual distance from the design-specific fitted space, whereas \(Q_{\mathrm{inc}}\) is the additional distance introduced by restricting the design-specific fit to the globally consistent NMA space. The \(Q\) statistics themselves are established NMA diagnostics; CSP provides an equivalent nested-projection representation of their decomposition. Importantly, \(Q_{\mathrm{het}}\) and \(Q_{\mathrm{inc}}\) are complementary components of the same global statistic rather than separate competing analyses, and their sum reproduces \(Q_{\mathrm{net}}\) exactly up to numerical precision.}

\rev{Let \(r=\operatorname{rank}(\mathbf V)\). Under the fixed-effects model and the usual regularity conditions, the corresponding degrees of freedom are}
\[
\rev{
df_{\mathrm{net}}
=
r-(T-1),
\qquad
df_{\mathrm{het}}
=
r-p_D,
\qquad
df_{\mathrm{inc}}
=
p_D-(T-1),
}
\]
\rev{so that}
\[
\rev{
df_{\mathrm{net}}
=
df_{\mathrm{het}}
+
df_{\mathrm{inc}}.
}
\]
\rev{The three statistics may therefore be assessed using their corresponding chi-squared reference distributions. \(Q_{\mathrm{net}}\) provides a global test of the homogeneous consistency-constrained network model, \(Q_{\mathrm{het}}\) tests heterogeneity among studies sharing the same design, and \(Q_{\mathrm{inc}}\) tests inconsistency between design-specific treatment effects. When a component has zero degrees of freedom, no corresponding test is available.}

\rev{The connection with classical pairwise meta-analysis follows immediately. When \(T=2\), all studies have the same two-treatment design, so \(p_D=T-1=1\), \(\widehat{\mathbf y}_{\mathrm D}=\widehat{\mathbf y}_{\mathrm C}\), and therefore}
\[
\rev{
Q_{\mathrm{inc}}=0,
\qquad
df_{\mathrm{inc}}=0.
}
\]
\rev{For \(K\) independent studies,}
\[
\rev{
Q_{\mathrm{net}}
=
Q_{\mathrm{het}}
=
\sum_{i=1}^K
\frac{(y_i-\widehat\theta)^2}{\sigma_i^2},
\qquad
df_{\mathrm{net}}
=
df_{\mathrm{het}}
=
K-1,
}
\]
\rev{which is exactly Cochran's \(Q\) test for heterogeneity in pairwise meta-analysis. Thus, inconsistency is absent when only two treatments are present, while the global network statistic reduces to the classical heterogeneity statistic.}

\rev{The distinction between \(Q_{\mathrm{het}}\) and \(Q_{\mathrm{inc}}\) is design-based and should not be interpreted as an absolute separation of biological sources of variation. In particular, a design represented by only one study has no within-design heterogeneity degrees of freedom, so a deviation associated with that study may instead contribute to \(Q_{\mathrm{inc}}\). Accordingly, heterogeneity and inconsistency should be considered jointly \parencite{Krahn2013}.}

\rev{The formal decomposition and chi-squared tests above are presented for the fixed-effects specification used in the empirical applications. Under a random-effects NMA, between-study heterogeneity is modeled directly through \(\tau^2\) and incorporated in \(\mathbf V(\widehat{\tau}^2)\). Conditional on this fitted covariance matrix, the same nested projections yield the exact algebraic identity \(Q_{\mathrm{net}}^{\mathrm{RE}}=Q_{\mathrm{error}}+Q_{\mathrm{inc}}^{\mathrm{RE}}\), where \(Q_{\mathrm{error}}\) is the residual within-design error after accounting for the modeled heterogeneity and \(Q_{\mathrm{inc}}^{\mathrm{RE}}\) is the remaining between-design inconsistency component conditional on \(\widehat{\tau}^2\). Because the heterogeneity variance is estimated and already incorporated in the model, \(Q_{\mathrm{error}}\) should not be interpreted as a second heterogeneity test, and the fixed-effects chi-squared reference tests are not applied to these random-effects diagnostic components. For formal inconsistency inference under random effects, an established approach is to fit a random-effects design-by-treatment interaction model and jointly test the additional inconsistency parameters, for example using a global Wald test \parencite{Higgins2012,White2012}.}

\rev{\textbf{Running example continued.} The three studies have designs \(ABC\), \(AB\), and \(AC\), respectively. Hence}
\[
\rev{
p_D=(3-1)+(2-1)+(2-1)=4=\operatorname{rank}(\mathbf V).
}
\]
\rev{Because each design occurs only once, the design-specific model fits the observed contrasts exactly, giving}
\[
\rev{
Q_{\mathrm{het}}=0,
\qquad
df_{\mathrm{het}}=0.
}
\]
\rev{The global statistic is}
\[
\rev{
Q_{\mathrm{net}}=\frac{2}{3},
\qquad
df_{\mathrm{net}}=4-(3-1)=2.
}
\]
\rev{Therefore,}
\[
\rev{
Q_{\mathrm{inc}}
=
Q_{\mathrm{net}}-Q_{\mathrm{het}}
=
\frac{2}{3},
\qquad
df_{\mathrm{inc}}
=
4-(3-1)
=
2,
}
\]
\rev{corresponding to \(p=0.717\). In this example all estimable departure from the homogeneous consistency model is between designs because none of the designs is replicated, but the inconsistency test is not statistically significant.}

\rev{All of these CSP-based tools—the canonical forest plot, path-based visualizations, tension plots, and the generalized Cochran \(Q\) diagnostics—are illustrated in Section~\ref{sec:application}, demonstrating how the projection-based framework provides complementary views of evidence contribution, agreement between evidence sources, within-design heterogeneity, and between-design inconsistency. Specifically, the COVID-19 example illustrates canonical weights and multiple study-labeled paths in Figure~\ref{fig:csp_3d_decomposition}, Table~\ref{tab:covid_canonical_paths}, and Figure~\ref{fig:AE_paths_tikz}; the psoriasis example illustrates the tension plot in Figure~\ref{fig:psoriasis_tension}; the antidepressant example illustrates the study-level forest plot in Figure~\ref{fig:bupropion_placebo_forest}; and Table~\ref{tab:network_summary} reports \(Q_{\mathrm{net}}\), \(Q_{\mathrm{het}}\), and \(Q_{\mathrm{inc}}\) for all three empirical networks.}

\section{Applications}\label{sec:application}

To evaluate the practical performance of the proposed contrast-space
framework, we apply the method to three empirical \rev{NMA} datasets spanning markedly different levels of network
complexity. The examples are chosen to illustrate complementary aspects
of the methodology.

First, we analyze a compact COVID-19 therapeutic network constructed
from a subset of randomized platform trials included in the WHO living
NMA of COVID-19 treatments
\parencite{siemieniuk2020drug}, which informed early WHO treatment
guidelines \parencite{WHO_REACT_2020, WHO_Guideline_2020}. This example
provides a transparent setting in which the canonical direct and
indirect decomposition can be visualized explicitly and interpreted
at the individual study and path levels.

Second, we consider a moderately sized network of biologic therapies
for plaque psoriasis derived from randomized trials evaluating
PASI-75 response outcomes \parencite{Sbidian2021}. \rev{This network contains a mixture of two-arm and multi-arm studies and therefore provides a realistic example of the multi-arm covariance structures frequently encountered in contemporary clinical evidence synthesis.}

Finally, to demonstrate scalability, we analyze the large
antidepressant network compiled by \textcite{Cipriani2018}. This
dataset includes hundreds of trials and dozens of treatments,
generating a high-dimensional evidence structure in which indirect
comparisons are propagated through a complex web of active-comparator
trials.

Figure~\ref{fig:network_topologies} provides an overview of the
treatment-network structures for the three empirical datasets. The
networks differ substantially in both size and connectivity. The
COVID-19 example (Panel~A) represents a compact network dominated by
large multi-arm platform trials with shared control arms, resulting in
strong within-study covariance structures. The psoriasis biologics
network (Panel~B) is moderately sized and contains a mixture of
two-arm and multi-arm trials, producing a richer set of indirect
comparisons across treatments. In contrast, the antidepressant
network (Panel~C) forms a large and densely connected evidence
structure involving dozens of treatments and hundreds of trials.

\begin{figure}[htbp]
    \centering
    \includegraphics[height=0.85\textheight, keepaspectratio]{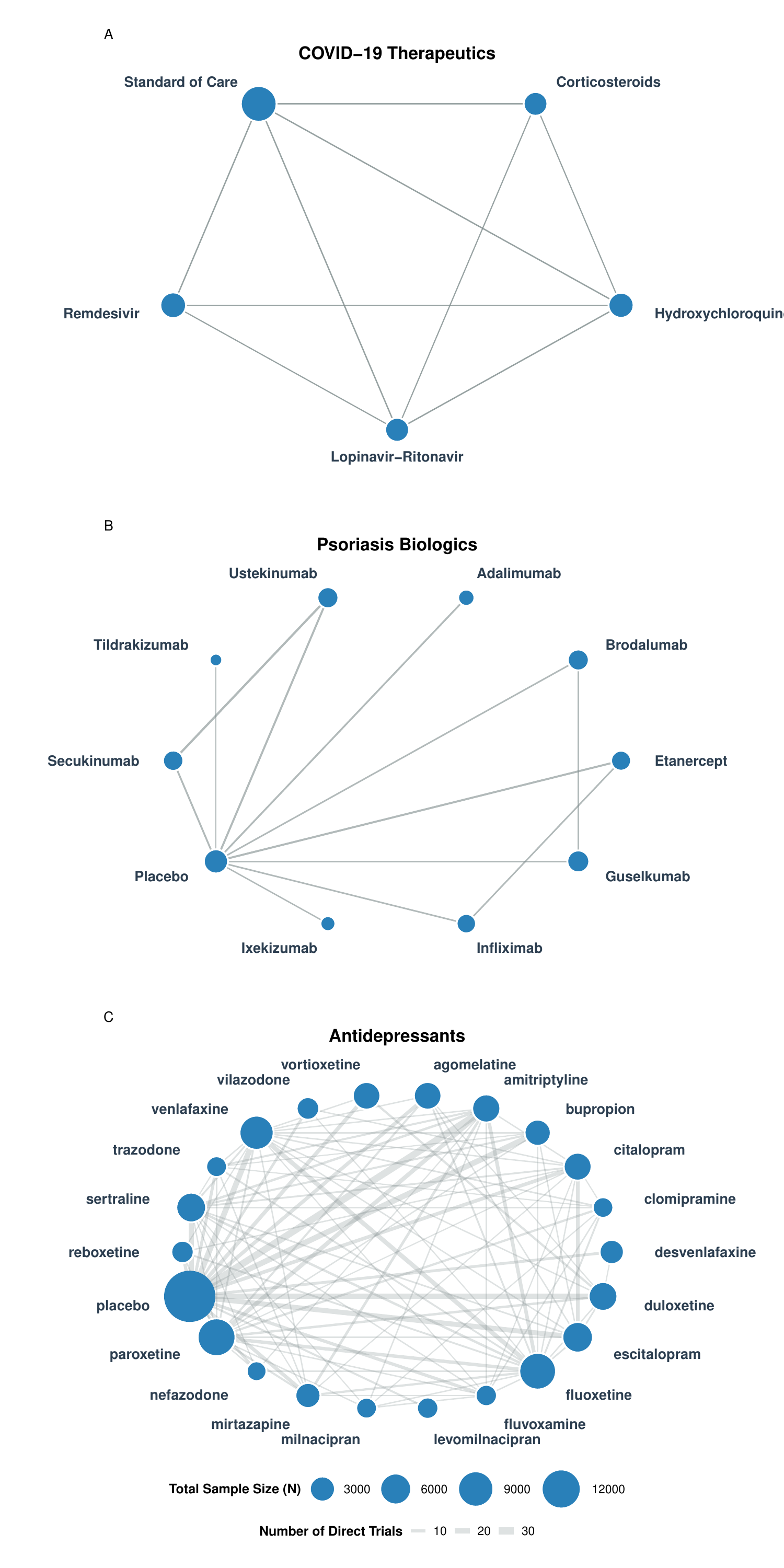}
    \caption{\textbf{Empirical network topologies.} Structural representation of the three clinical datasets: (A) COVID-19 therapeutics, (B) psoriasis biologics, and (C) the antidepressant network. Treatments are arranged radially to visualize topological density and indirect connectivity. Node area is proportional to the total number of randomized patients for that treatment. Edges represent direct head-to-head evidence, with thickness mapped globally to the number of independent trials. To preserve structural clarity in high-density graphs, edge transparency is dynamically scaled. A single shared global scale is used across panels}
    \label{fig:network_topologies}
\end{figure}

Together, these networks provide complementary empirical settings for
illustrating the proposed framework and the associated diagnostic tools
derived from the projection-based decomposition: a small pedagogical network for transparent study-level and path-level interpretation, a
medium-sized clinical network typical of modern comparative
effectiveness reviews, and a large-scale network illustrating the application of the approach in complex settings. All datasets were reconstructed from published sources and are used
here solely to illustrate the methodological properties of the proposed
framework.

\subsection{Overview of the empirical networks}

Table~\ref{tab:network_summary} summarizes the key structural
characteristics of the three empirical networks analyzed in this
section. The examples represent increasing levels of network
complexity: a compact COVID-19 treatment network constructed from
major platform trials, a moderately sized biologics network for
psoriasis containing a mixture of two-arm and multi-arm studies, and the large antidepressant network. \rev{All analyses presented here are conducted under a fixed-effects specification. Under a random-effects specification, the fitted covariance matrix becomes \(\mathbf V(\widehat{\tau}^2)\), while the NMA estimator and the CSP direct, indirect, study-level, and path-level decompositions retain the same algebraic construction conditional on that covariance matrix. The interpretation of the \(Q\) diagnostics differs under random effects, as described in Section~\ref{sec:qdecomp}.}

\begin{table}[ht]
\centering
\caption{\rev{Generalized Cochran \(Q\) diagnostics for the empirical NMA datasets}}
\label{tab:network_summary}
\resizebox{\linewidth}{!}{
\begin{tabular}{lccccc}
\toprule
Dataset & \#Treatments (\(T\)) & rank(\(\mathbf V\)) & \(Q_{\mathrm{net}}\) (\(df\); \(p\)) & \(Q_{\mathrm{het}}\) (\(df\); \(p\)) & \(Q_{\mathrm{inc}}\) (\(df\); \(p\)) \\
\midrule
COVID therapeutics       & 5  & 9   & 5.687 (5; 0.338) & 0.357 (1; 0.550) & 5.330 (4; 0.255) \\
Psoriasis biologics      & 10 & 33  & 28.578 (24; 0.236) & 25.688 (20; 0.176) & 2.890 (4; 0.576) \\
Cipriani antidepressants & 22 & 542 & 760.236 (521; \(<0.001\)) & 526.763 (380; \(<0.001\)) & 233.473 (141; \(<0.001\)) \\
\bottomrule
\end{tabular}
}
\end{table}

\rev{In addition to summarizing network size, Table~\ref{tab:network_summary} reports the generalized Cochran \(Q\) diagnostics for each empirical network. The global statistic satisfies \(Q_{\mathrm{net}}=Q_{\mathrm{het}}+Q_{\mathrm{inc}}\), where \(Q_{\mathrm{het}}\) measures heterogeneity among studies sharing the same design and \(Q_{\mathrm{inc}}\) measures inconsistency between designs. The corresponding degrees of freedom satisfy the same decomposition. Thus, these diagnostics use the same observed contrasts and covariance structure, with the design-specific projection serving only to partition the global statistic into its within-design heterogeneity and between-design inconsistency components; it does not produce an alternative set of network treatment-effect estimates.}

\rev{For the COVID-19 network, neither the global statistic (\(Q_{\mathrm{net}}=5.69\), \(df=5\), \(p=0.338\)) nor its heterogeneity (\(Q_{\mathrm{het}}=0.36\), \(df=1\), \(p=0.550\)) or inconsistency (\(Q_{\mathrm{inc}}=5.33\), \(df=4\), \(p=0.255\)) component is statistically significant. The psoriasis network likewise shows no significant global departure (\(Q_{\mathrm{net}}=28.58\), \(df=24\), \(p=0.236\)), within-design heterogeneity (\(Q_{\mathrm{het}}=25.69\), \(df=20\), \(p=0.176\)), or between-design inconsistency (\(Q_{\mathrm{inc}}=2.89\), \(df=4\), \(p=0.576\)). In contrast, the antidepressant network shows strong evidence of both within-design heterogeneity (\(Q_{\mathrm{het}}=526.76\), \(df=380\), \(p<0.001\)) and between-design inconsistency (\(Q_{\mathrm{inc}}=233.47\), \(df=141\), \(p<0.001\)), which together produce the significant global statistic \(Q_{\mathrm{net}}=760.24\) with \(df=521\).}

\subsection{COVID-19 therapeutic network}

We first consider a five-treatment network based on several
large randomized platform trials evaluating COVID-19 therapies. The treatments include standard care, corticosteroids, lopinavir--ritonavir, hydroxychloroquine, and remdesivir.

To demonstrate the proposed contrast-space framework in a transparent
setting, we constructed a five-treatment subnetwork from the living
NMA of COVID-19 therapeutics
\parencite{siemieniuk2020drug}. We focus on a core cluster of
foundational platform trials evaluating corticosteroids and antiviral
therapies (RECOVERY, SOLIDARITY, ACTT-1, CoDEX, and REMAP-CAP).
Clinically, the intersection of these trials formed the primary
evidence base for the World Health Organization's early guidance on
systemic COVID-19 therapies
\parencite{WHO_REACT_2020, WHO_Guideline_2020}. This subset provides a
tractable example in which the covariance structure induced by
multi-arm platform trials can be verified explicitly, and the
resulting projection-based decomposition can be interpreted at the
individual study and path levels. This example is used solely for methodological illustration and does not aim to provide a complete or up-to-date synthesis of the COVID-19 evidence base.

Panel~A of Figure~\ref{fig:network_topologies} displays the treatment
network. Because several platform trials simultaneously evaluated
multiple treatments against a shared control arm, the covariance
matrix of the direct estimates contains non-zero off-diagonal elements
reflecting within-study correlation induced by shared controls in
multi-arm trials.

\rev{For each target comparison, the canonical decomposition assigns weights to the direct study contributions and to the aggregate indirect component, with the weights summing to one. Figure~\ref{fig:csp_3d_decomposition} provides a graphical overview of these weights across all target comparisons in the COVID-19 network, allowing the relative importance of direct studies and indirect evidence to be compared visually across the network. The purpose of this display is complementary to Table~\ref{tab:covid_canonical_paths}: the figure provides an overall visual summary for practitioners, whereas the table reports the exact numerical weights and the individual study-labeled paths underlying each comparison.}

\begin{figure}[htbp]
\centering
\includegraphics[
    width=\textwidth,
    trim=4 0 0 0, clip, 
    height=1\textheight,
    keepaspectratio
]{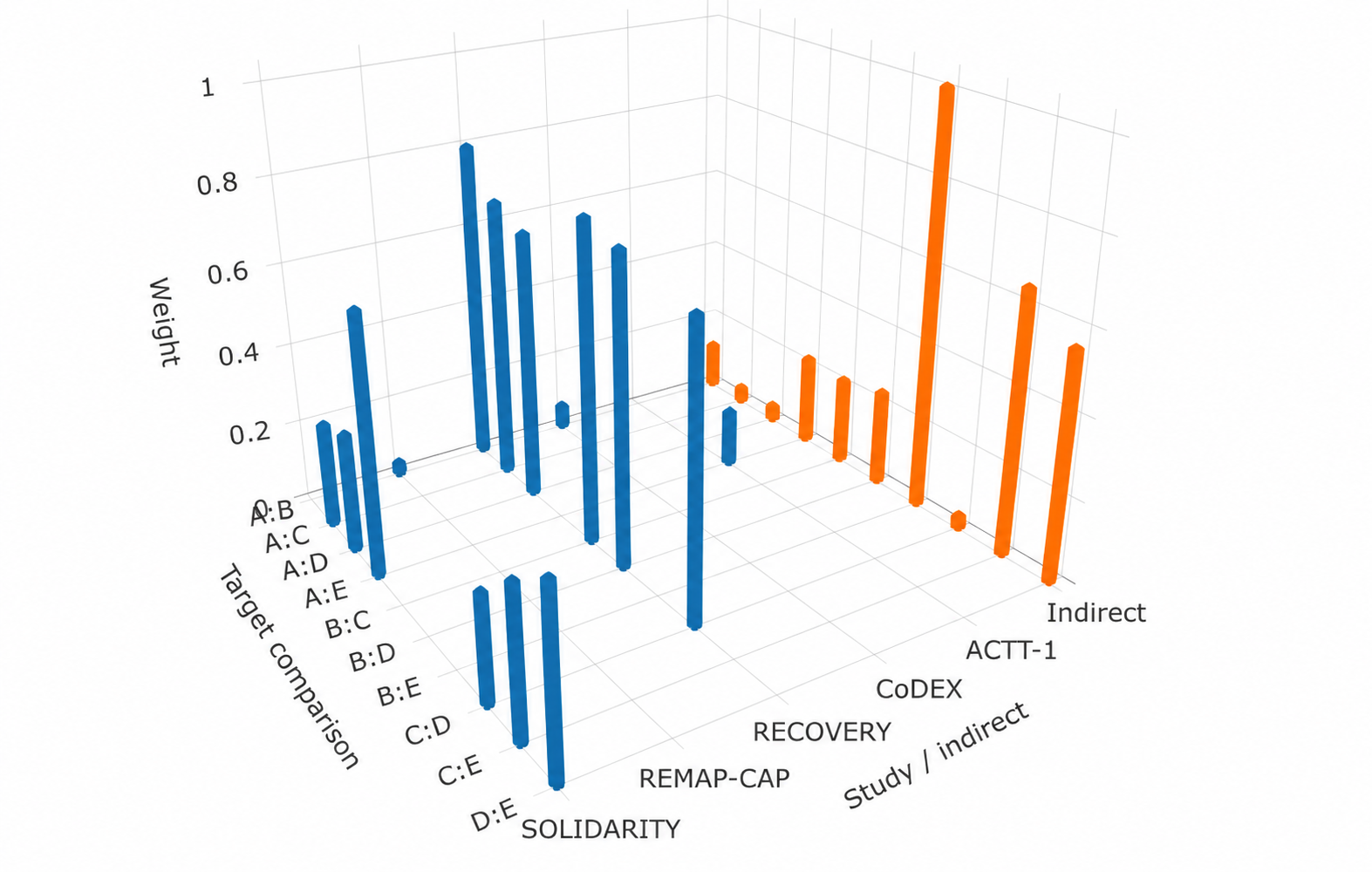}
\caption{\textbf{Three-dimensional display of the decomposition with canonical weights in the COVID-19 network.}
For each target comparison, vertical bars represent the exact weights
in the canonical CSP decomposition. Along the horizontal axes, one
dimension indexes the target comparison and the other indexes the
contributing source. Blue bars correspond to direct study
contributions, labeled by study, and the orange bar gives the total
indirect weight obtained by summing over all canonical indirect
paths. \rev{For each target comparison, the displayed weights sum to one. The figure is intended to provide a visual overview of contribution patterns across comparisons; exact numerical weights and individual indirect paths are reported in Table~\ref{tab:covid_canonical_paths}}}
\label{fig:csp_3d_decomposition}
\end{figure}

The decomposition highlights substantial variation across target comparisons. Some contrasts are dominated by direct evidence from one
or two studies, whereas others depend more heavily on indirect
network pathways. In particular, the comparison between
corticosteroids and remdesivir (\(B{:}E\)) is informed entirely by
indirect evidence, while comparisons such as \(A{:}B\), \(A{:}C\), and
\(B{:}C\) retain dominant direct components.

Table~\ref{tab:covid_canonical_paths} lists the full canonical
direct and indirect study-path decomposition for the COVID-19
network. For each target comparison, the table reports the canonical direct
study paths, the canonical indirect paths, and their exact weights in
the linear decomposition of the corresponding network estimate. These
weights sum to one within each target comparison and therefore provide
a complete coefficient-scale representation of how evidence is
distributed across studies and pathways.

\begin{table}[htbp]
\centering
\caption{Canonical direct and indirect study-path decomposition for the COVID-19 network. For each target comparison, the table lists the canonical direct study paths and indirect paths identified by the CSP decomposition. \rev{The weight column gives the canonical coefficient of each path in the linear decomposition of the corresponding network estimate. The underlying weights sum to one within each target comparison; displayed weights are rounded to four decimal places and therefore may not sum exactly to one.} \textit{Abbreviations: Dir = Direct; Ind = Indirect; R = RECOVERY; S = SOLIDARITY; A = ACTT-1; C = CoDEX; M = REMAP-CAP}}
\label{tab:covid_canonical_paths}
\vspace{0.5em}

\scriptsize
\setlength{\tabcolsep}{2pt}

\makebox[\textwidth][c]{%
\begin{tabular}[t]{@{} l l l l l @{}}
\toprule
Vs. & Type & Studies & Path & Weight\\
\midrule
A:B & Dir & R & $A{\to}B$ & 0.7923 \\
    & Dir & C & $A{\to}B$ & 0.0616 \\
    & Dir & M & $A{\to}B$ & 0.0343 \\
    & Ind & S/R & $A{\to}D{\to}B$ & 0.0636 \\
    & Ind & S/R & $A{\to}C{\to}B$ & 0.0412 \\
    & Ind & A/S/R & $A{\to}E{\to}C{\to}B$ & 0.0070 \\
\midrule
A:C & Dir & R & $A{\to}C$ & 0.6971 \\
    & Dir & S & $A{\to}C$ & \rev{0.2632} \\
    & Ind & C/R & $A{\to}B{\to}C$ & 0.0143 \\
    & Ind & A/S & $A{\to}E{\to}C$ & \rev{0.0015} \\
    & Ind & A/S/R & $A{\to}E{\to}D{\to}C$ & \rev{0.0161} \\
    & Ind & M/R & $A{\to}B{\to}C$ & 0.0079 \\
\midrule
A:D & Dir & R & $A{\to}D$ & 0.6618 \\
    & Dir & S & $A{\to}D$ & 0.2972 \\
    & Ind & A/S & $A{\to}E{\to}D$ & 0.0167 \\
    & Ind & C/R & $A{\to}B{\to}D$ & 0.0135 \\
    & Ind & M/R & $A{\to}B{\to}D$ & 0.0075 \\
    & Ind & A/S/R & $A{\to}E{\to}C{\to}D$ & 0.0032 \\
\midrule
A:E & Dir & S & $A{\to}E$ & 0.6392 \\
    & Dir & A & $A{\to}E$ & 0.1405 \\
    & Ind & R/S & $A{\to}D{\to}E$ & 0.1253 \\
    & Ind & R/S & $A{\to}C{\to}E$ & 0.0882 \\
    & Ind & C/R/S & $A{\to}B{\to}C{\to}E$ & 0.0044 \\
    & Ind & M/R/S & $A{\to}B{\to}C{\to}E$ & 0.0024 \\
\midrule
B:C & Dir & R & $B{\to}C$ & 0.7835 \\
    & Ind & R/S & $B{\to}A{\to}C$ & \rev{0.0953} \\
    & Ind & R/S & $B{\to}D{\to}C$ & 0.0475 \\
    & Ind & C/S & $B{\to}A{\to}C$ & 0.0473 \\
    & Ind & M/S & $B{\to}A{\to}C$ & \rev{0.0158} \\
    & Ind & \rev{M/A/S} & $B{\to}A{\to}E{\to}C$ & 0.0106 \\
\bottomrule
\end{tabular}%
\hspace{1em}
\begin{tabular}[t]{@{} l l l l l @{}}
\toprule
Vs. & Type & Studies & Path & Weight \\
\midrule
B:D & Dir & R & $B{\to}D$ & 0.7496 \\
    & Ind & R/S & $B{\to}A{\to}D$ & \rev{0.1305} \\
    & Ind & C/S & $B{\to}A{\to}D$ & 0.0481 \\
    & Ind & R/S & $B{\to}C{\to}D$ & 0.0450 \\
    & Ind & M/S & $B{\to}A{\to}D$ & \rev{0.0139} \\
    & Ind & \rev{M/A/S} & $B{\to}A{\to}E{\to}D$ & 0.0129 \\
\midrule
B:E & Ind & R/S & $B{\to}A{\to}E$ & \rev{0.5344} \\
    & Ind & R/S & $B{\to}D{\to}E$ & 0.1889 \\
    & Ind & R/S & $B{\to}C{\to}E$ & 0.1432 \\
    & Ind & R/A & $B{\to}A{\to}E$ & \rev{0.0444} \\
    & Ind & \rev{C/A} & $B{\to}A{\to}E$ & 0.0572 \\
    & Ind & M/A & $B{\to}A{\to}E$ & \rev{0.0319} \\
\midrule
C:D & Dir & R & $C{\to}D$ & 0.7021 \\
    & Dir & S & $C{\to}D$ & 0.2615 \\
    & Ind & R/S & $C{\to}A{\to}D$ & 0.0341 \\
    & Ind & R/A/S & $C{\to}A{\to}E{\to}D$ & 0.0012 \\
    & Ind & R/C/A/S & $C{\to}B{\to}A{\to}E{\to}D$ & 0.0007 \\
    & Ind & R/M/A/S & $C{\to}B{\to}A{\to}E{\to}D$ & 0.0004 \\
\midrule
C:E & Dir & S & $C{\to}E$ & 0.3597 \\
    & Ind & R/S & $C{\to}A{\to}E$ & \rev{0.3760} \\
    & Ind & R/S & $C{\to}D{\to}E$ & 0.1414 \\
    & Ind & R/A & $C{\to}A{\to}E$ & \rev{0.1075} \\
    & Ind & R/C/A & $C{\to}B{\to}A{\to}E$ & \rev{0.0099} \\
    & Ind & \rev{R/M/A} & $C{\to}B{\to}A{\to}E$ & 0.0055 \\
\midrule
D:E & Dir & S & $D{\to}E$ & 0.4392 \\
    & Ind & R/S & $D{\to}A{\to}E$ & \rev{0.3419} \\
    & Ind & R/A & $D{\to}A{\to}E$ & \rev{0.1064} \\
    & Ind & R/S & $D{\to}C{\to}E$ & 0.0982 \\
    & Ind & R/C/A & $D{\to}B{\to}A{\to}E$ & 0.0092 \\
    & Ind & R/M/A & $D{\to}B{\to}A{\to}E$ & \rev{0.0051} \\
\bottomrule
\end{tabular}%
}
\end{table}

As one concrete example, Figure~\ref{fig:AE_paths_tikz} displays the
canonical direct and indirect study paths for the target comparison
between standard care and remdesivir (\(A{:}E\)). The two dominant
direct contributions arise from SOLIDARITY and ACTT-1, with weights
0.6392 and 0.1405, respectively. The remaining weight is partitioned
among four indirect paths, the largest of which is the
RECOVERY--SOLIDARITY path \(A \to D \to E\) with weight 0.1253. This
visualization makes clear that even in a small network, the network
estimate may be assembled from a mixture of direct evidence and
multiple lower-weight indirect pathways.

\rev{The \(A{:}E\) comparison also illustrates the removal of apparent cycles. Before canonicalization, the raw full-contrast coefficients included a RECOVERY segment \(C\to D\) and a SOLIDARITY segment \(D\to C\), each with weight approximately \(0.0076\). Pooling these raw segments produced an apparent \(C\to D\to C\) cycle. The within-study treatment-balance reduction replaced these algebraically redundant internal segments by equivalent positive-to-negative endpoint transfers. Consequently, neither opposing segment remained in the canonical graph, and the subsequent path extraction left no positive residual coefficient or directed cycle.}

\begin{figure}[ht]
\centering
\begin{tikzpicture}[
    >=Stealth,
    node_base/.style={circle, draw=black!70, line width=1.2pt, inner sep=0pt, fill=gray!20, fill opacity=0.85},
    wt/.style={fill=white, inner sep=1.5pt, rounded corners=1pt, execute at begin node={\small\bfseries}, text=black},
    mid arrow/.style={postaction={decorate, decoration={markings, mark=at position 0.5 with {\arrow{>}}}}}
]

\pgfmathsetmacro{\maxflowwidth}{15} 

\definecolor{actt}{HTML}{E41A1C}
\definecolor{codex}{HTML}{984EA3}
\definecolor{recovery}{HTML}{4DAF4A}
\definecolor{remap}{HTML}{A65628}
\definecolor{solidarity}{HTML}{FF7F00}

\coordinate (A) at (0.0,  0.0);
\coordinate (E) at (10.0, 0.0);
\coordinate (B) at (3.0,  3.5);
\coordinate (C) at (7.0,  3.5);
\coordinate (D) at (5.0, -3.5);

\coordinate (Bmid) at (B);
\coordinate (Btop) at ($ (B) + (0, 1.8mm) $);
\coordinate (Cmid) at (C);
\coordinate (Ctop) at ($ (C) + (0, 1.8mm) $);
\coordinate (Cbot) at ($ (C) - (0, 2.5mm) $);

\draw[solidarity,  mid arrow,  line width=0.6392*\maxflowwidth pt] (A) to[out=40, in=140]
    node[pos=0.55, wt] {0.6392} (E);

\draw[actt,  mid arrow,  line width=0.15*\maxflowwidth pt] (A) to[out=-20, in=200]
    node[pos=0.55, wt] {0.1405} (E);

\draw[recovery,  mid arrow, line width=0.13*\maxflowwidth pt] (A) to[out=-60, in=180]
    node[pos=0.55, wt] {0.1253} (D);
\draw[solidarity,  mid arrow,  line width=1.7pt] (D) to[out=0, in=240] (E);

\draw[recovery,  mid arrow, line width=0.1*\maxflowwidth pt] (A) to[out=25, in=180]
    node[pos=0.6, wt] {0.0882} (Cbot);
\draw[solidarity,  mid arrow,  line width=0.1*\maxflowwidth pt] (Cbot) to[out=0, in=120] (E);

\draw[codex, mid arrow, line width=0.8pt] (A) to[out=60, in=180] 
    node[pos=0.55, xshift=12pt, wt] {0.0044} (Bmid);

\draw[recovery, mid arrow, line width= 0.01*\maxflowwidth pt] (Bmid) to[out=0, in=180] (Cmid);
\draw[solidarity, mid arrow, line width=0.8pt] (Cmid) to[out=0, in=135] (E);

\draw[remap,  mid arrow, line width= 0.01*\maxflowwidth pt] (A) to[out=70, in=180]
    node[pos=0.4,xshift=-12pt,  wt] {0.0024} (Btop);
\draw[recovery,  mid arrow, line width=0.8pt] (Btop) to[out=0, in=180] (Ctop);
\draw[solidarity,  mid arrow,  line width=0.8pt] (Ctop) to[out=0, in=145] (E);

\node[node_base, minimum size=2.00cm] (nodeA) at (A) {};
\node[align=center, font=\small\bfseries] at (A) {\contour{white}{Standard}\\[-0.3ex]\contour{white}{Care}};

\node[node_base, minimum size=1.00cm] (nodeB) at (B) {};
\node[align=center, font=\footnotesize\bfseries] at (B) {\contour{white}{Cortico-}\\[-0.3ex]\contour{white}{steroids}};

\node[node_base, minimum size=1.05cm] (nodeC) at (C) {};
\node[align=center, font=\footnotesize\bfseries] at (C) {\contour{white}{Hydroxy-}\\[-0.3ex]\contour{white}{chloroquine}};

\node[node_base, minimum size=1.10cm] (nodeD) at (D) {};
\node[align=center, font=\footnotesize\bfseries] at (D) {\contour{white}{Lopinavir-}\\[-0.3ex]\contour{white}{Ritonavir}};

\node[node_base, minimum size=1.20cm] (nodeE) at (E) {};
\node[align=center, font=\small\bfseries] at (E) {\contour{white}{Remdesivir}};

\draw[-{Stealth[length=12pt, width=30pt]}, line width=\maxflowwidth pt, cyan!80!blue] ($(nodeA.west) + (-1.5, 0)$) -- ($(nodeA.west) + (0, 0)$)
    node[midway, above = 4pt, text=black] {\textbf{1.0000}};
\draw[-{Stealth[length=12pt, width=30pt]}, line width= \maxflowwidth pt, cyan!80!blue] ($(nodeE.east) + (0.4, 0)$) -- ($(nodeE.east) + (1.9, 0)$)
    node[midway, above = 4pt, text=black] { \textbf{1.0000}};

\begin{scope}[shift={(-0.85, -5)}]
    \draw[actt, ->, line width=1.5pt] (0.0,0) -- (0.55,0)
        node[anchor=west, xshift=2pt, text=black] {\footnotesize ACTT-1};

    \draw[codex, ->, line width=1.5pt] (2.05,0) -- (2.60,0)
        node[anchor=west, xshift=2pt, text=black] {\footnotesize CoDEX};

    \draw[recovery, ->, line width=1.5pt] (4.05,0) -- (4.60,0)
        node[anchor=west, xshift=2pt, text=black] {\footnotesize RECOVERY};

    \draw[remap, ->, line width=1.5pt] (6.55,0) -- (7.10,0)
        node[anchor=west, xshift=2pt, text=black] {\footnotesize REMAP-CAP};

    \draw[solidarity, ->, line width=1.5pt] (9.3,0) -- (9.85,0)
        node[anchor=west, xshift=2pt, text=black] {\footnotesize SOLIDARITY};
\end{scope}

\end{tikzpicture}
\caption{\textbf{Canonical direct and indirect study paths for the target comparison \(A{:}E\) (standard care versus remdesivir) in the COVID-19 network.}
Each continuous path represents one component in the canonical path-level decomposition of the \(A{:}E\) network estimate. \rev{The arrows entering \(A\) and leaving \(E\), each labeled 1.0000, indicate that the canonical path weights sum to one. Within each path, segment colors identify the study contributing the corresponding treatment comparison; therefore, an indirect path involving multiple studies may change color along its route. Path widths are proportional to the corresponding canonical weights, with a minimum width imposed on the smallest paths to maintain visibility, and the displayed values indicate the path weights.} The two thick direct paths from \(A\) to \(E\) dominate the decomposition, while the remaining indirect paths are routed through other treatments in the network}
\label{fig:AE_paths_tikz}
\end{figure}

\subsection{Psoriasis biologics network}

We next analyze a network of biologic therapies for moderate-to-severe
plaque psoriasis derived from randomized clinical trials evaluating
PASI-75 response rates. \rev{The dataset contains a mixture of two-arm and multi-arm studies, making it a useful example for illustrating the handling of multi-arm correlations.}

The network includes 23 studies comparing 10 treatments and produces
43 observed study-level contrasts. 

Figure~\ref{fig:psoriasis_mosaic_direct_indirect} summarizes the direct
and indirect decomposition of comparison-level contributions for the psoriasis biologics network. Each comparison is represented by a horizontal band, where the partition between direct and indirect components shows how the network estimate is constructed from the two sources of evidence. This representation makes it possible to assess
which comparisons rely primarily on direct evidence, which are driven
by indirect pathways, and which reflect a mixture of both. In the
psoriasis network, several comparisons are dominated by indirect
contributions, whereas a smaller subset retains a substantial direct
component, reflecting variation in how evidence propagates across the treatment network.
\begin{figure}[t]
\centering
\includegraphics[width=1\textwidth]{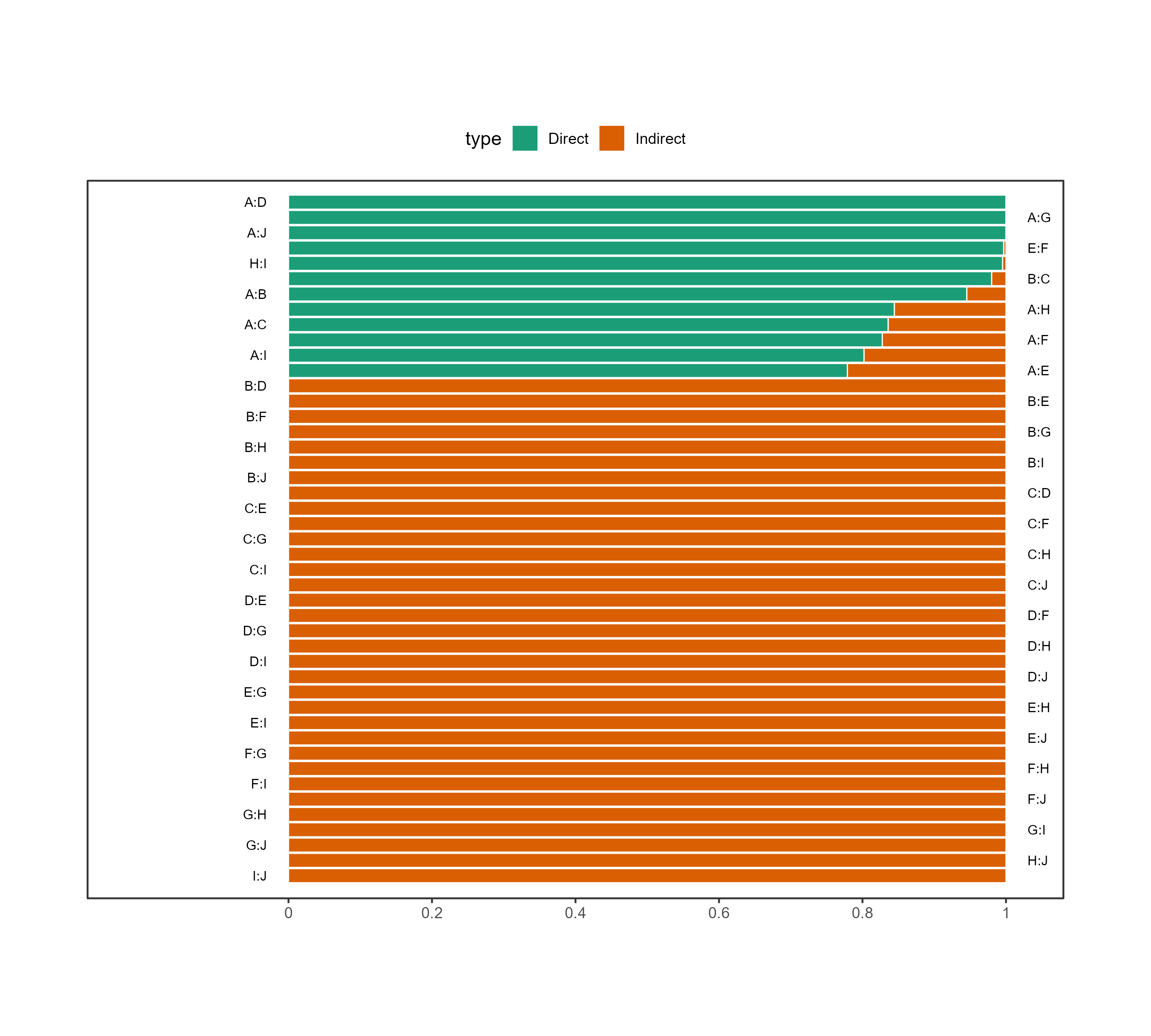}
\caption{
\textbf{Direct and indirect contribution decomposition for all pairwise comparisons in the psoriasis biologics network.}
Each horizontal band corresponds to a target comparison \(a{:}b\). Within each band, the width is normalized to one and partitioned into direct and indirect components according to their relative contribution weights. The horizontal split therefore represents how the network estimate for each comparison is composed from direct and indirect evidence}
\label{fig:psoriasis_mosaic_direct_indirect}
\end{figure}

To further examine agreement between evidence sources, we construct a tension plot based on the contrast-space projection decomposition (Figure~\ref{fig:psoriasis_tension}). For each comparison against the baseline treatment \(A\), the network estimate is decomposed into direct and indirect components on the estimator scale with associated uncertainty, and the figure directly compares these components with the overall network estimate. \rev{Because all quantities arise from the same linear decomposition, the three estimates are algebraically linked and fully consistent with the GLS model.}

\begin{figure}[htbp]
\centering
\includegraphics[width=\linewidth]{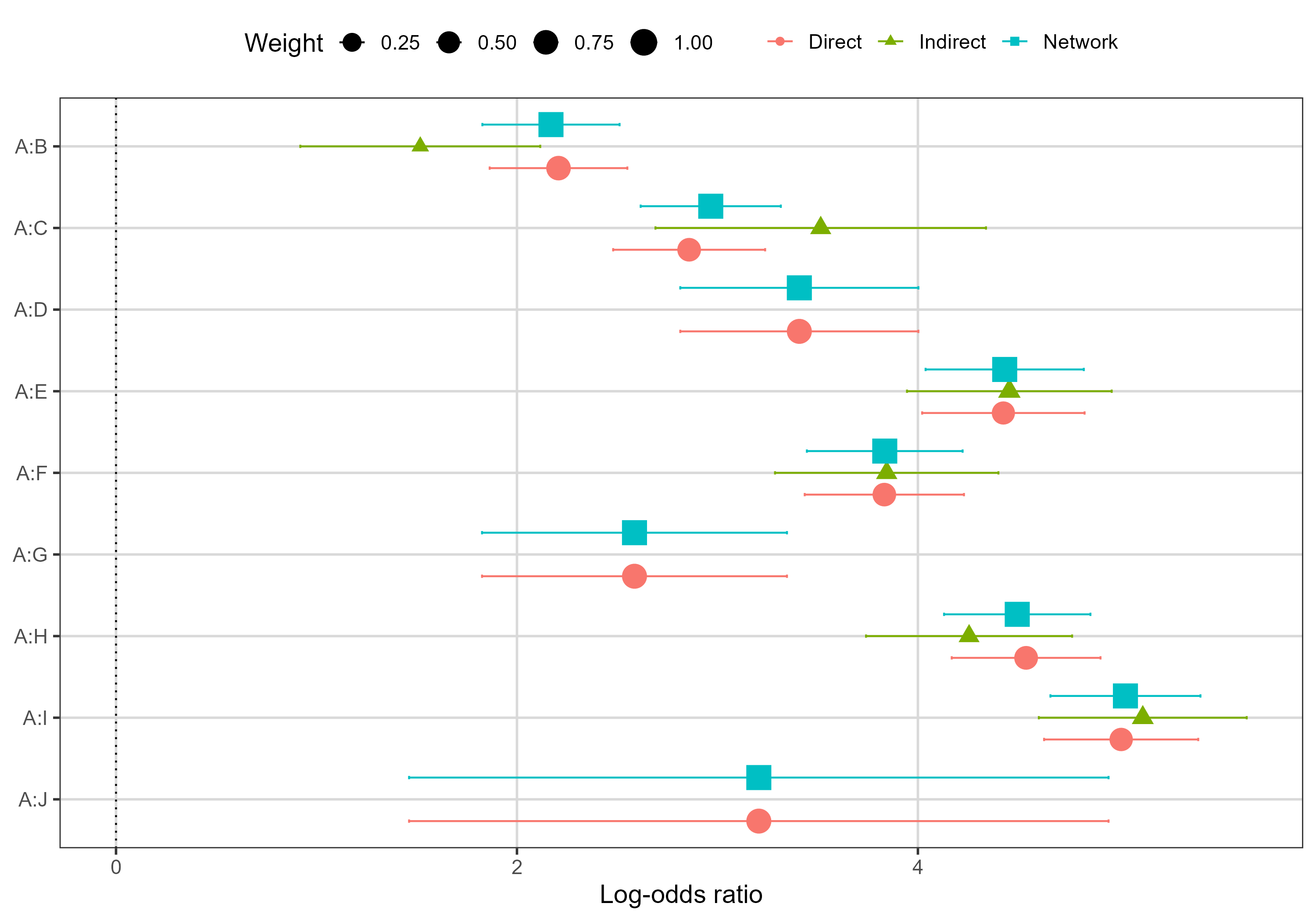}
\caption{
\textbf{Direct, indirect, and network estimates for baseline comparisons in the psoriasis biologics network.}
For each target comparison against treatment \(A\), points represent the direct, indirect, and network estimates derived from the contrast-space projection decomposition, with horizontal bars indicating 95\% confidence intervals. Point sizes are scaled according to the corresponding canonical direct and indirect weights. The vertical dotted line at zero corresponds to no treatment effect on the log-odds-ratio scale. \rev{Close agreement between direct and indirect components suggests consistency,} while discrepancies reflect tension between evidence sources}

\label{fig:psoriasis_tension}
\end{figure}

When direct and indirect estimates are closely aligned, the network estimate lies between them with reduced uncertainty, indicating coherent evidence integration; in contrast, separation between the direct and indirect components signals tension within the network, potentially arising from heterogeneity or inconsistency in the underlying evidence. 

\subsection{Antidepressant treatment network}

Finally, we apply the framework to the large antidepressant network
compiled by \textcite{Cipriani2018}. This network includes 473 trials
comparing 22 pharmacological treatments, forming a high-dimensional evidence structure with extensive
indirect connectivity. Because the network is extremely large, full visualization of all study--contrast or path-level contributions is difficult to interpret. We therefore focus on study-level and comparison-specific decompositions derived from the projection-based representation, as illustrated in Figure~\ref{fig:bupropion_placebo_forest} for the comparison between bupropion and placebo. \rev{The figure displays the direct study estimates and their canonical weights together with the aggregated direct and indirect estimates, allowing direct assessment of how individual studies and evidence sources contribute to the network estimate.}

\rev{In this example, the network estimate is $-0.412$, with $65.2\%$ direct evidence and $34.8\%$ indirect evidence.} The aggregated overall direct estimate ($-0.374$) and indirect estimate ($-0.483$) are both negative but differ in magnitude, with the indirect component shifting the network estimate toward a stronger treatment effect. The distribution of study-level contributions is highly uneven and strongly skewed: a small fraction of studies accounts for a disproportionately large share of the total weight, while the majority of studies contribute only marginally, forming a long tail of small contributions. \rev{Although several studies receive relatively larger weights, no single study dominates the overall estimate, and the remaining weight is distributed across a long tail of smaller contributions.} The confidence intervals reflect substantial variability across study-level contributions, including at least one study with a markedly more extreme estimate and limited overlap with others, indicating variation among the contributing study estimates.

\begin{figure}[t]
\centering
\includegraphics[width=\textwidth]{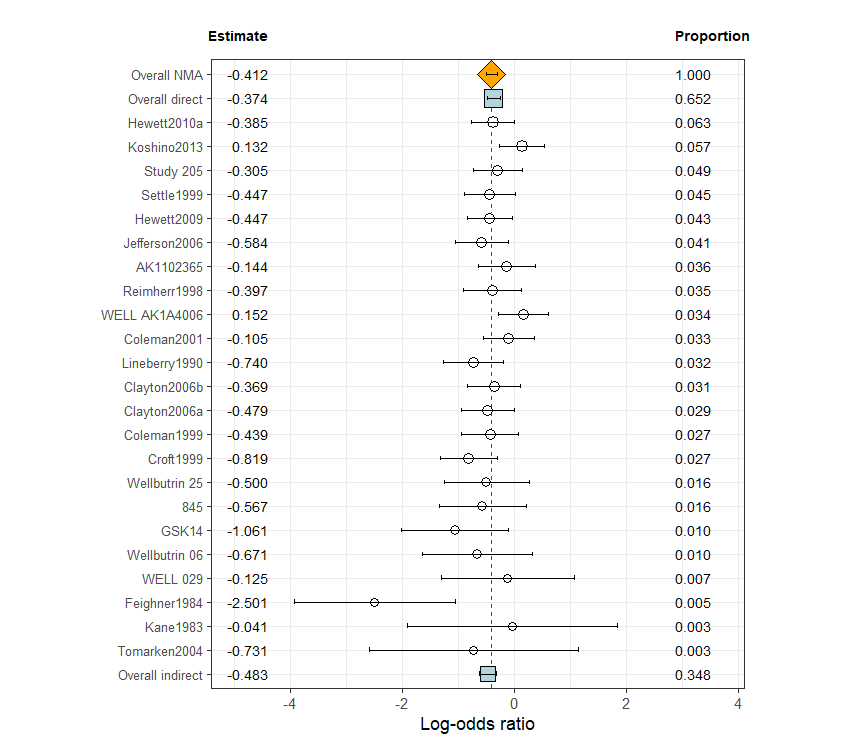}
\caption{
\textbf{Forest plot for canonical direct and indirect study contributions for the comparison bupropion versus placebo in the antidepressant network.}
The red dashed vertical line marks the overall NMA estimate. The two shaded summary rows display the aggregated direct and indirect components derived from the contrast-space projection decomposition. Each remaining row corresponds to an individual direct-study estimate, shown with its 95\% confidence interval. Point size is proportional to the corresponding canonical direct weight, and the values in the rightmost column report the corresponding weights}
\label{fig:bupropion_placebo_forest}
\end{figure}

\rev{All contributions shown in the figure arise from the same fitted CSP mapping, ensuring that the decomposition is numerically reproducible and fully consistent with the fitted network model.} Although the canonical decomposition resolves indirect evidence into path-level components, the number of such paths can be large in high-dimensional networks; in these settings, the study-level representation is more interpretable, while path-level decompositions are more informative in smaller networks such as the COVID-19 example. Together with the earlier visualizations, this example demonstrates how the contrast-space projection framework enables consistent interpretation of evidence from individual studies to network-wide structure within a unified linear-operator formulation.

\section{Conclusion and discussion}\label{sec:conclusion}

\rev{We present a reproducible linear framework for NMA in which the estimator, contribution measures, and diagnostic visualizations arise from the same GLS mapping of observed contrasts to fitted network estimates, while the generalized Cochran \(Q\) decomposition is represented through nested GLS projections using the same observed contrasts and covariance structure.} Building on this representation, we develop a covariance-aware and representation-invariant decomposition of the network estimator into direct, indirect, study-level, edge-level, and path-level contributions that remains algebraically consistent with the \rev{GLS} formulation of NMA. Across the three empirical applications, this framework provides a transparent account of how evidence propagates through the treatment network. Within this framework, the projection geometry provides the common mechanism that separates direct from indirect evidence, canonical from non-canonical study representations, and the global departure from the homogeneous consistency model into within-design heterogeneity and between-design inconsistency.

A central contribution of this work is the introduction of a rigorous
study-based definition of direct and indirect evidence through
canonical within-study reduction. This construction removes algebraic
redundancy induced by multi-arm trials and yields a unique and
invariant decomposition that does not depend on the choice of
within-study parameterization. As a result, study-level direct and
indirect contributions are defined unambiguously and are fully
reproducible across equivalent representations of the data.

The projection-based formulation also yields a natural generalization
of classical inverse-variance weighting. In the pairwise setting,
study weights determine how individual studies contribute to the
pooled estimate. In the network setting, the projection coefficients
play an analogous role, assigning interpretable weights to both
direct study contributions and indirect path-level components. This
connection provides a unifying perspective linking classical
meta-analysis and NMA within a common linear
framework.

Building on the CSP framework, we develop a set of diagnostic and
graphical tools—including study-based forest plots, \rev{the generalized Cochran \(Q\) diagnostics \(Q_{\mathrm{net}}\), \(Q_{\mathrm{het}}\), and \(Q_{\mathrm{inc}}\)}, tension plots comparing direct and indirect components, and path-based
visualizations—that provide interpretable summaries of how evidence
contributes to each network estimate. \rev{The contribution and visualization tools are derived directly from the fitted CSP mapping and therefore require no alternative NMA fit, while the generalized Cochran \(Q\) decomposition uses an auxiliary design-specific projection solely to partition \(Q_{\mathrm{net}}\) into its heterogeneity and inconsistency components. } In particular, the canonical decomposition enables the construction of study-level forest plots that exactly reconstruct the NMA estimate, with each study contributing additively through its canonical component. This yields a transparent and interpretable decomposition of the network estimate and provides, to our knowledge, the first direct extension of classical inverse-variance–weighted forest plots to the network setting, allowing clear interpretation of study contributions in complex treatment networks.

The empirical examples illustrate how the proposed framework enables
interpretation at multiple levels of resolution. In small networks,
such as the COVID-19 example, indirect evidence can be decomposed into
individual study paths, providing a detailed view of evidence flow. In
larger networks, such as the antidepressant dataset, the number of
indirect paths becomes substantial, and study-level summaries provide a
more practical and interpretable representation. Thus, the framework
naturally adapts to network size, supporting both fine-grained and
aggregated views of evidence contribution.

\rev{A defining feature of CSP is numerical and algebraic reproducibility with respect to the fitted NMA model. CSP does not produce a competing set of network treatment-effect estimates: for every target comparison, the canonical direct, indirect, study-level, and path-level contributions sum exactly to the single fitted NMA estimate obtained from the corresponding GLS model. For the fixed-effects diagnostics reported in the empirical applications, the diagnostic decomposition does not introduce a competing global statistic: the established generalized Cochran components satisfy \(Q_{\mathrm{net}}=Q_{\mathrm{het}}+Q_{\mathrm{inc}}\) exactly. Under a random-effects specification, heterogeneity is instead modeled through \(\tau^2\), and conditional on \(\mathbf V(\widehat{\tau}^2)\) the analogous algebraic identity is written \(Q_{\mathrm{net}}^{\mathrm{RE}}=Q_{\mathrm{error}}+Q_{\mathrm{inc}}^{\mathrm{RE}}\); these random-effects quantities are descriptive conditional diagnostics rather than the fixed-effects chi-squared tests. Thus, the NMA estimates, contribution displays, and corresponding \(Q\) diagnostics remain mutually compatible and traceable to the same observed contrasts and fitted covariance structure. Applied users who require exact reconstruction of the fitted estimate or the associated study-level forest, path, tension, and diagnostic displays may therefore use CSP without obtaining a second, potentially conflicting set of NMA treatment-effect estimates. For the fixed-effects decomposition, \(Q_{\mathrm{het}}\) and \(Q_{\mathrm{inc}}\) should nevertheless be interpreted jointly because their distinction is design-based; in particular, deviations in sparsely replicated designs may contribute to \(Q_{\mathrm{inc}}\) rather than to \(Q_{\mathrm{het}}\).}

Beyond the tools presented here, the CSP framework supports a broader set
of methodological and diagnostic extensions to the
network setting in a mathematically coherent and covariance-aware manner. In particular, the projection formulation
can be extended to network meta-regression by incorporating covariates into the
contrast-space design. Additional
developments, including tools for identifying influential paths, assessing sensitivity
to subsets of studies, and quantifying structural features of the network, are beyond
the scope of the present paper due to space constraints and will be presented in
future work.

\subsection*{\rev{Scope and limitations}}
\rev{The study-based direct and indirect contributions are uniquely defined and invariant to equivalent within-study parameterizations. The further decomposition of the total indirect contribution into individual study-labeled paths is optional and is primarily useful for interpretation and visualization. In general, a network flow can admit more than one valid decomposition into individual paths, so the individual path allocation is not uniquely determined by the total indirect contribution alone. When path-level reporting is desired, CSP therefore applies the deterministic path-extraction rule described in Section~\ref{sec:canonical_decomposition} to select one reproducible decomposition that is unique under the stated construction; users interested only in the total indirect contribution need not perform or report this finer path decomposition. The complete acyclic source-to-sink path representation is guaranteed under the arm-based covariance condition in Proposition~\ref{prop:acyclic_canonical_graph}, whereas the invariant study-level direct/indirect decomposition itself does not require this stronger condition. Under random effects, contribution weights and uncertainty measures are conditional on \(\mathbf V(\widehat{\tau}^2)\) and do not propagate uncertainty in \(\widehat{\tau}^2\). Finally, detailed path-level output can become cumbersome in dense networks, for which aggregated study-level summaries may be more practical.}

\paragraph{Acknowledgments}
Dr. O’Connor's research is supported by the Albert C. Dehn and Lois E. Dehn Chair in Veterinary Medicine, College of Veterinary Medicine, Michigan State University.
The authors used ChatGPT (OpenAI; GPT-5.6 Sol), accessed through \url{https://chatgpt.com} during manuscript preparation and revision in 2026, to assist with language editing, organization, clarity of presentation, and checking the consistency of mathematical notation and exposition. All AI-assisted suggestions were critically reviewed and verified by the authors, who take full responsibility for the accuracy, integrity, and originality of the manuscript. The methodology, statistical analyses, code, reported results, and scientific conclusions were developed, conducted, and verified by the authors.
\paragraph{Funding Statement}
The authors were supported by the National Science Foundation (NSF) under grant DMS-2413834, by the National Institutes of Health (NIH) under grants U19 AG068054 and U19 AG063893, and by the U.S. Department of Agriculture (USDA) National Institute of Food and Agriculture (NIFA) under grants IOWW-2022-10059 and 2024-68014-42361.
\paragraph{Ethics Statement}
\rev{Ethics approval was not required because this methodological study used only previously published, publicly available aggregate data and involved no new collection of human or animal participant data.}
\paragraph{Author Contributions}
\rev{Chong Wang: Conceptualization, Methodology, Project administration, Software, Supervision, Validation, Visualization, Writing--original draft, Writing--review \& editing. Yanqi Zhang: Investigation, Methodology, Software, Visualization, Writing--original draft, Writing--review \& editing. Zhezhen Jin: Methodology, Writing--review \& editing. Annette O'Connor: Investigation, Methodology, Validation, Visualization, Writing--review \& editing.}

\paragraph{Competing Interests}

The authors declare none.

\paragraph{Data Availability Statement}

Code implementing the contrast-space projection algorithm proposed in this paper is publicly available at \url{https://github.com/chongwangstat/CSP-NMA}. The repository includes an illustrative example of the implementation.

The three empirical datasets analyzed in the manuscript were obtained from publicly available online sources, as described in the corresponding references.

No new data were generated for this study.
\printbibliography

@article{White2012, author = {White, Ian R. and Barrett, Jessica K. and Jackson, Dan and Higgins, Julian P. T.}, title = {Consistency and inconsistency in network meta-analysis: model estimation using multivariate meta-regression}, journaltitle = {Research Synthesis Methods}, year = {2012}, volume = {3}, pages = {111--125}, doi = {10.1002/jrsm.1045}}

@article{Krahn2013, author = {Krahn, Ulrike and Binder, Harald and K{\"o}nig, Jochem}, title = {A graphical tool for locating inconsistency in network meta-analyses}, journaltitle = {BMC Medical Research Methodology}, year = {2013}, volume = {13}, pages = {35}, doi = {10.1186/1471-2288-13-35}}

@article{lu2011linear,
  author = {Lu, Guobing and Welton, Nicky J. and Higgins, Julian P. T. and White, Ian R. and Ades, A. E.},
  title = {Linear inference for mixed treatment comparison meta-analysis: A two-stage approach},
  journal = {Research Synthesis Methods},
  volume = {2},
  number = {1},
  pages = {43--60},
  year = {2011},
  doi = {10.1002/jrsm.34}
}

@article{Lu2004,
  author = {Lu, Gerta and Ades, A. E.},
  title = {Combination of direct and indirect evidence in mixed treatment comparisons},
  journal = {Statistics in Medicine},
  year = {2004},
  volume = {23},
  number = {20},
  pages = {3105--3124},
  doi = {10.1002/sim.1875}
}

@article{Dias2013,
  author  = {Dias, Sofia and Sutton, Alex J. and Ades, Anthony E. and Welton, Nicky J.},
  title   = {Evidence synthesis for decision making 2: A generalized linear modeling framework for pairwise and network meta-analysis of randomized controlled trials},
  journal = {Medical Decision Making},
  year    = {2013},
  volume  = {33},
  number  = {5},
  pages   = {607--617},
  doi     = {10.1177/0272989X12458724}
}

@article{Salanti2011,
  author  = {Salanti, Georgia and Ades, Anthony E. and Ioannidis, John P. A.},
  title   = {Graphical methods and numerical summaries for presenting results from multiple-treatment meta-analysis: An overview and tutorial},
  journal = {Journal of Clinical Epidemiology},
  year    = {2011},
  volume  = {64},
  number  = {2},
  pages   = {163--171},
  doi     = {10.1016/j.jclinepi.2010.03.016}
}

@article{RuckerSchwarzer2015,
  author  = {R{\"u}cker, Gerta and Schwarzer, Guido},
  title   = {Ranking treatments in frequentist network meta-analysis works without resampling methods},
  journal = {BMC Medical Research Methodology},
  year    = {2015},
  volume  = {15},
  pages   = {58},
  doi     = {10.1186/s12874-015-0060-8}
}

@book{hedges1985statistical,
  author    = {Hedges, Larry V. and Olkin, Ingram},
  title     = {Statistical Methods for Meta-Analysis},
  publisher = {Academic Press},
  year      = {1985},
  address   = {Orlando, FL}
}

@article{Hutton2015,
  author  = {Hutton, Brian and Salanti, Georgia and Caldwell, Deborah M. and others},
  title   = {The PRISMA extension statement for reporting of systematic reviews incorporating network meta-analyses of health care interventions: checklist and explanations},
  journal = {Annals of Internal Medicine},
  year    = {2015},
  volume  = {162},
  number  = {11},
  pages   = {777--784},
  doi     = {10.7326/M14-2385}
}

@article{Rucker2012,
  author  = {R{\"u}cker, Gerta},
  title   = {Network meta-analysis, electrical networks and graph theory},
  journal = {Research Synthesis Methods},
  year    = {2012},
  volume  = {3},
  number  = {4},
  pages   = {312--324},
  doi     = {10.1002/jrsm.1058}
}

@article{Koenig2013,
  author  = {K{\"o}nig, Johannes and Krahn, Uta and Binder, Harald},
  title   = {Visualizing the flow of evidence in network meta-analysis and characterizing mixed treatment comparisons},
  journal = {Statistics in Medicine},
  year    = {2013},
  volume  = {32},
  number  = {30},
  pages   = {5414--5429}
}

@article{Papakonstantinou2018,
  author  = {Papakonstantinou, Theodoros and Nikolakopoulou, Adriani and Egger, Matthias and Salanti, Georgia},
  title   = {Estimating the contribution of studies and specific outcomes to network meta-analysis results},
  journal = {Statistics in Medicine},
  year    = {2018},
  volume  = {37},
  number  = {12},
  pages   = {1947--1960}
}

@article{Davies2022,
  author  = {Davies, Annabel L. and Papakonstantinou, Theodoros and Nikolakopoulou, Adriani and R{\"u}cker, Gerta and Galla, Tobias},
  title   = {Network meta-analysis and random walks},
  journal = {Statistics in Medicine},
  year    = {2022},
  volume  = {41},
  number  = {11},
  pages   = {2091--2114},
  doi     = {10.1002/sim.9346}
}

@article{Rucker2020,
  author  = {R{\"u}cker, Gerta and Nikolakopoulou, Adriani and Papakonstantinou, Theodoros and Salanti, Georgia and Riley, Richard D. and Schwarzer, Guido},
  title   = {The statistical importance of a study for a network meta-analysis estimate},
  journal = {BMC Medical Research Methodology},
  year    = {2020},
  volume  = {20},
  pages   = {190}
}

@article{Mao2025,
  author  = {Mao, Y. and Shen, Y. and Yang, Q. and Shi, Q. and Li, S.},
  title   = {A leave-one-out algorithm for contribution analysis in component network meta-analysis},
  journal = {BMC Medical Research Methodology},
  year    = {2025},
  volume  = {25},
  pages   = {191}
}

@article{Rucker2014reduce,
  author  = {R{\"u}cker, Gerta and Schwarzer, Guido},
  title   = {Reduce dimension or reduce weights? Comparing two approaches to multi-arm studies in network meta-analysis},
  journal = {Statistics in Medicine},
  year    = {2014},
  volume  = {33},
  number  = {25},
  pages   = {4353--4369},
  doi     = {10.1002/sim.6236}
}

@article{Jackson2012,
  author  = {Jackson, Dan and White, Ian R. and Riley, Richard D.},
  title   = {Quantifying the impact of between-study heterogeneity in multivariate meta-analysis},
  journal = {Statistics in Medicine},
  year    = {2012},
  volume  = {31},
  number  = {29},
  pages   = {3805--3820},
  doi     = {10.1002/sim.5453}
}

@article{Higgins2012,
  author  = {Higgins, Julian P. T. and Jackson, Dan and Barrett, James K. and Lu, Guobing and Ades, A. E. and White, Ian R.},
  title   = {Consistency and inconsistency in network meta-analysis: concepts and models for multi-arm studies},
  journal = {Research Synthesis Methods},
  year    = {2012},
  volume  = {3},
  number  = {2},
  pages   = {98--110}
}

@article{siemieniuk2020drug,
  author  = {Siemieniuk, Reed A. C. and others},
  title   = {Drug treatments for COVID-19: living systematic review and network meta-analysis},
  journal = {British Medical Journal},
  year    = {2020},
  volume  = {370},
  pages   = {m2980},
  doi     = {10.1136/bmj.m2980}
}

@article{WHO_REACT_2020,
  author  = {{WHO Rapid Evidence Appraisal for COVID-19 Therapies (REACT) Working Group} and Sterne, Jonathan A. C. and others},
  title   = {Association between administration of systemic corticosteroids and mortality among critically ill patients with COVID-19: a meta-analysis},
  journal = {Journal of the American Medical Association},
  year    = {2020},
  volume  = {324},
  number  = {13},
  pages   = {1330--1341}
}

@article{WHO_Guideline_2020,
  author  = {Lamontagne, Fran{\c{c}}ois and others},
  title   = {A living WHO guideline on drugs for COVID-19},
  journal = {British Medical Journal},
  year    = {2020},
  volume  = {370},
  pages   = {m3379}
}

@article{Sbidian2021,
  author  = {Sbidian, Emilie and others},
  title   = {Systemic pharmacological treatments for chronic plaque psoriasis: a network meta-analysis},
  journal = {Cochrane Database of Systematic Reviews},
  year    = {2021},
  number  = {4}
}

@article{Cipriani2018,
  author  = {Cipriani, Andrea and others},
  title   = {Comparative efficacy and acceptability of 21 antidepressant drugs for the acute treatment of adults with major depressive disorder: a systematic review and network meta-analysis},
  journal = {The Lancet},
  year    = {2018},
  volume  = {391},
  number  = {10128},
  pages   = {1357--1366}
}

@article{White2015,
  author  = {White, Ian R.},
  title   = {Network meta-analysis},
  journal = {The Stata Journal},
  year    = {2015},
  volume  = {15},
  number  = {4},
  pages   = {951--985},
  doi     = {10.1177/1536867X1501500403}
}

@article{ShihTu2021,
  author  = {Shih, Ming-Chieh and Tu, Yu-Kang},
  title   = {An evidence-splitting approach to evaluation of direct-indirect evidence inconsistency in network meta-analysis},
  journal = {Research Synthesis Methods},
  year    = {2021},
  volume  = {12},
  number  = {2},
  pages   = {226--238},
  doi     = {10.1002/jrsm.1480}
}

\end{document}